\pgfplotsset{width=10cm,compat=1.9}
\newcommand{\remove}[1]{}
\begin{document}
\pagestyle{plain}

\title{Blunting an Adversary Against Randomized Concurrent Programs with Linearizable Implementations}

%\author{Attiya, Enea and Welch}

\author{Hagit Attiya}
\email{hagit@cs.technion.ac.il}
\affiliation{%
  \institution{Technion}
  \city{Haifa}
  \country{Israel, {\sf hagit@cs.technion.ac.il}}
}

\author{Constantin Enea}
\email{cenea@irif.fr}
\affiliation{%
  \institution{LIX, Ecole Polytechnique, CNRS and Institut Polytechnique de Paris}
  \city{Palaiseau}
  \country{France, {\sf cenea@irif.fr}}
}

\author{Jennifer L.~Welch}
\email{welch@cse.tamu.edu}
\affiliation{%
  \institution{Texas A\&M University}
  \city{College Station, TX}
  \country{USA, {\sf welch@cse.tamu.edu}}
}

\begin{abstract}
Atomic shared objects, whose operations take place instantaneously,
are a powerful abstraction for designing complex concurrent programs.
Since they are not always available,
they are typically substituted with software implementations. 
A prominent condition relating these implementations 
to their atomic
specifications is {\em linearizability}, which preserves safety
properties of the programs using them.
However linearizability does not preserve \emph{hyper-properties},
which include probabilistic guarantees of randomized programs:
an adversary can greatly amplify the probability of a bad outcome,
such as nontermination, by manipulating the order of events inside
the implementations of the operations.
This unwelcome behavior prevents modular reasoning,
which is the key benefit provided by the use of linearizable object
implementations.
A more restrictive property, \emph{strong linearizability}, does
preserve hyper-properties but it is impossible to achieve in
many situations.

This paper suggests a novel approach to blunting the adversary's
additional power that works even in cases where
strong linearizability is not achievable.
\emph{We show that a wide class of linearizable implementations,
including well-known ones for
registers and snapshots, can be
modified to approximate the probabilistic guarantees of randomized
programs when using atomic objects.}
The technical approach is to
transform the algorithm of each method of an existing linearizable
implementation by repeating a carefully chosen prefix of the method several
times and then randomly choosing which repetition to use subsequently.
We prove that the probability of a bad outcome decreases with the
number of repetitions, approaching the probability attained when
using atomic objects.
The class of implementations to which our transformation applies
includes the ABD implementation of a shared register using
message-passing, the Afek et al.\
implementation of an atomic snapshot using single-writer registers,
the Vit\'{a}nyi and Awerbuch implementation of a multi-writer register
using single-writer registers,
and the Israeli and Li implementation of a
multi-reader register using single-reader registers,
all of which are widely used in asynchronous crash-prone systems.
\end{abstract}

\maketitle
%\newpage
%\setcounter{page}{1}

%!TEX root = main.tex

\section{Introduction}
\label{section:intro}

Atomic shared objects, whose operations take place instantaneously,
are a powerful abstraction for designing complex concurrent programs,
as they allow developers to reason about their programs in terms of
familiar data structures.
Since they are not always available, they are typically
substituted with software implementations.
A prominent condition relating these implementations
to their atomic
specifications is {\em linearizability}~\cite{HerlihyW1990}.
It provides the illusion that processes
communicate through shared objects on which operations occur
instantaneously in a sequential order,
called the \emph{linearization order},
regardless of the actual communication mechanism.
A key benefit of linearizability is that it preserves any safety
property enjoyed by the
program when it is executed with atomic objects.

Unfortunately, linearizability does not preserve
\emph{hyper-properties}~\cite{ClarksonS2010},
which include probabilistic guarantees of randomized programs.
As demonstrated by examples
in~\cite{GolabHW2011,AttiyaE2019,HadzilacosHT2020randomized},
an adversary can greatly amplify the probability of a bad outcome,
such as nontermination, by manipulating the order of events inside
the implementations of the operations.
Such behavior invalidates the key benefit of using linearizable
objects, which is the modularity that they provide by hiding implementation
details behind an interface that mimics atomic behavior.
To overcome this drawback, Golab, Higham and Woelfel~\cite{GolabHW2011}
proposed a more restrictive property, {\em strong linearizability},
that preserves hyper-properties, including probability distributions.
However, not many strongly-linearizable implementations are
known and in fact they are impossible in several important cases
(cf.\ Section~\ref{section:related}).

This paper suggests a novel approach to blunting the adversary's
additional power that works even in cases where strong linearizability
is not achievable.
To motivate our approach, consider the well-known ABD~\cite{AttiyaBD1995}
linearizable implementation of a read-write register in crash-prone
message-passing systems and how it behaves
in the context of the simple program given in
Algorithm~\ref{algorithm:simplified-weakener}, which we distill
from the \emph{weakener} program~\cite{HadzilacosHT2020randomized}.
In the multi-writer version of ABD~\cite{LynchS1997}, which we
consider throughout, both read and write operations start with a
``query'' message-exchange phase in which the invoking process obtains
the timestamp associated with the most recent value.
Then, both operations execute an ``update'' message-exchange phase;
the reader announces the latest value and timestamp before returning
the value, while the writer announces the new value and assigns it a larger
timestamp.
The linearization order of the operations is completely determined by
the maximal timestamps that are obtained during the query phases,
and thus, their order is determined at the end of the query phase.

\begin{wrapfigure}{r}{0.5\textwidth}
\begin{minipage}{0.5\textwidth}
\begin{algorithm}[H]
\caption{
  \small Processes $p_0$, $p_1$, and $p_2$ share two registers $R$,
  written by $p_0$ and $p_1$ and read by $p_2$,
  and $C$, written by $p_1$ and read by $p_2$.}
\label{algorithm:simplified-weakener}
\begin{algorithmic}[1]
\STATE Initially:  $R = \bot$, $C = -1$
\STATE {\bf Code for $p_i$, $i \in \{0,1\}$:}
\STATE  $R$ := $i$
\STATE  if ($i = 1$) then $C$ := flip fair coin (0 or 1)
  \label{line:sim-weak-flip}
\STATE {\bf Code for $p_2$:}
\STATE $u_1$ := $R$; $u_2$ := $R$; $c$ := $C$
\STATE if (($u_1 = c$) $\wedge$ ($u_2 = 1 - c$)) then loop forever
   \label{testline}
\STATE else terminate
\end{algorithmic}
\end{algorithm}
\end{minipage}
\vspace{1mm}
\end{wrapfigure}

Algorithm~\ref{algorithm:simplified-weakener}
has two processes, $p_0$ and $p_1$, that write their ids to register $R$,
then $p_1$ flips a coin and writes the result to another register $C$.
A third process, $p_2$, reads $R$ twice and $C$ once; if it succeeds
in reading both ids from $R$ and the first id that it reads equals
the result of the coin flip, then it loops forever,
otherwise it terminates.
When the registers are atomic, $p_2$ terminates with probability
at least one-half, for any adversary.
(See Appendix~\ref{ssec:weakener:atomic} for details.)
Yet when the registers are replaced with ABD implementations, a strong
adversary, which can observe processes' random choices~\cite{Aspnes2003},\footnote{
    Throughout this paper we consider only strong adversaries
    and sometimes drop the term ``strong''.}
can interleave the internal steps of the query phase
and the steps of the program so as to ensure that $p_2$
\emph{never} terminates.
(See Appendix~\ref{ssec:app_ABD} for details.)

Instead of attempting to find a strongly-linearizable replacement
for ABD, which is impossible~\cite{AttiyaEW2021,ChanHHT2021arxiv},
we make the key observation that the adversary can disrupt the workings
of the program only when the coin flip on Line~\ref{line:sim-weak-flip}
occurs during the query phase of a read or write operation.
The reason is that, after the query phase has completed, the linearization
order of the operation is fixed.
We also observe that
the query phase is ``effect-free'' in the sense that
it can be repeated multiple times without the repetitions interfering
with each other or with the behavior of the other processes.

Our modification to ABD is for each operation to \emph{execute the
query phase several times, and then randomly choose which one of the
values obtained} to use in the rest of the operation.
In Algorithm~\ref{algorithm:simplified-weakener},
the adversary can make only \emph{one} of these values depend
on the result of the coin flip (by scheduling the coin flip during
that iteration of the query phase), but
that value is used in the rest of the operation
with some probability strictly smaller than 1,
since values from query phases are chosen uniformly at random.
As a result, the program exhibits probabilistic behavior closer
to that seen with atomic objects.
For example, repeating the query phase twice when ABD
is used in Algorithm~\ref{algorithm:simplified-weakener}
ensures that $p_2$ terminates with probability at least 1/8,
in contrast with the 0 termination probability when using
the original ABD implementation.
(See Appendix~\ref{subsec:abd2-prob} for details.)
Thus by carefully introducing additional randomization inside the
linearizable implementation itself, we blunt the power of the adversary
to disrupt the behavior of the randomized program using the object,
while keeping the implementation linearizable.

We generalize this idea to develop a transformation for the class of
linearizable implementations
in which operations can be partitioned, informally speaking,
into an \emph{effect-free preamble} followed by a tail in which the
operation's linearization order is fixed.
The latter property is made precise under the notion of
\emph{tail strong linearizability} (Section~\ref{sec:tls}).
Our \emph{preamble-iterating transformation}
(Section~\ref{ssec:transf_tls}) repeats
the preamble in the implementation of each operation some number of times
and then randomly chooses the results of one repetition to use, producing
a linearizable implementation of the same object.

\emph{Our main result (Theorem~\ref{th:main})
is that the probability of the program reaching a bad
outcome with the transformed objects approaches the probability of reaching
the same bad outcome with atomic versions of the objects, as the number of
repetitions of the preamble increases relative to the number of
random choices made in the program}.
Specifically, we show that the probability of the bad outcome using
the transformed object is at most the probability of the bad outcome using
atomic objects, which is the best case, plus a fraction of the difference
between the probabilities of the bad outcome when using the linearizable
objects and using the atomic objects.
The fraction is the probability that adversary is able to manipulate
the behavior to its advantage, and it decreases as the number of repetitions
increases.

Our transformation applies to a broad class of both
shared-memory and message-passing implementations that are widely used,
and includes ABD (both its original single-writer
version~\cite{AttiyaBD1995} and its multi-writer
version~\cite{LynchS1997}),
the atomic snapshot algorithm~\cite{AfekADGMS1993},
the algorithm to construct a multi-writer register using
single-writer registers~\cite{VitanyiA1986}, and
the algorithm to construct a multi-reader register
using single-reader registers~\cite{IsraeliL1993}.
To summarize our contributions:
\begin{itemize}
\item We introduce a new strengthening of linearizability called
  \emph{tail strong linearizability} which, roughly speaking,
  imposes the requirements of strong linearizability only on executions
  in which each operation has passed its \emph{preamble}.
  (See Section~\ref{sec:tls} for the precise definition.)
  We show that this property is satisfied by a wide range of objects
  that also have effect-free preambles (Section~\ref{app:tasl_objects}).
\item We define a transformation of tail-strongly-linearizable objects
  with effect-free preambles, which
  iterates the preamble of each operation multiple times and
  then randomly chooses an iteration whose results will be used in the
  rest of the operation (Section~\ref{ssec:transf_tls}).
\item We characterize the blunting power of the ``preamble iterated''
  objects with a quantitative upper bound on the
  amount by which the probability of reaching a bad outcome increases
  when using the transformed objects instead of the atomic objects,
  and relative to using the original linearizable objects
  (Theorem~\ref{th:main} in Section~\ref{ssec:quantify}).
\end{itemize}

%!TEX root = main.tex

\newcommand{\impls}{\mathcal{I}}
\newcommand{\outs}{\mathcal{B}}
\newcommand{\objs}{\mathcal{O}}

\newcommand{\values}{\mathbb{V}}
\newcommand{\random}[1]{\ensuremath{\mathsf{random}({#1})}}
\newcommand{\grandom}{\random{V}}
\newcommand{\rd}[1]{\ensuremath{\mathsf{Read}({#1})}}
\renewcommand{\wr}[1]{{\mathsf{Write}({#1})}}
\newcommand{\exec}[3]{{e[{#1},{#2},{#3}]}}
\newcommand{\execs}{E}
\newcommand{\prob}[2]{{\mathit{OutDist}({#1},{#2})}}
\newcommand{\maxprob}[1]{{\mathit{Prob}_{{#1}}}}
\newcommand{\probOf}[1]{{\mathit{Prob}[{#1}]}}

\newcommand{\procs}{\mathbb{P}}
\newcommand{\states}{\mathbb{Q}}
\newcommand{\msgs}{\mathbb{M}\mathit{sgs}}
\newcommand{\meths}{\mathbb{M}\mathit{eths}}
\newcommand{\Objects}{\mathbb{O}\mathit{bjs}}
\newcommand{\vals}{\mathbb{V}}
\newcommand{\tup}[1]{\langle{#1}\rangle}
\newcommand{\calls}{\mathbb{C}a}
\newcommand{\returns}{\mathbb{R}a}
\newcommand{\acts}{\mathbb{A}}
\newcommand{\call}[1]{\mathit{call}\ {#1}}
\newcommand{\ret}[1]{\mathit{ret}\ {#1}}
\newcommand{\confs}{\mathbb{C}}
\newcommand{\cp}{\Pi}

\section{Preliminaries}

Randomized programs consist of a number of processes that invoke
methods
of some set of shared objects,
perform local computation, or sample values uniformly at random
from a given set of values.
We are interested in reasoning about the probability that
a strong adversary~\cite{Aspnes2003}
can cause a program
to reach a certain set of program outcomes,
defined as sets of values returned by method
invocations i.e., operations.

\subsection{Objects}

An \emph{object} is defined by a set of method names
and an implementation that defines the behavior of each method. Methods can be invoked in parallel at different processes.
In message-passing implementations, processes communicate by sending and receiving messages,
while in shared-memory implementations,
they communicate by invoking methods of a set of shared objects
(e.g., some class of registers) that execute instantaneously (in a single indivisible step), called \emph{base} objects.
The pseudo-code we will use to define such implementations can be translated in a straightforward manner to executions seen as sequences of labeled transitions between global states that track the local states of all the participating processes, the states of the shared base objects or the set of messages in transit, depending on the communication model, and the control point of each method invocation in a process.
Certain transitions of an execution correspond to initiating a new method invocation, called \emph{call transitions}, or returning from an invocation, called \emph{return transitions}.  Such transitions are labeled by call and return actions, respectively.
A \emph{call action} $\call{M(x)}_i$ labels a transition corresponding to invoking a method $M$ with argument $x$; $i$ is an identifier of this invocation. A \emph{return action} $\ret{y}_i$ labels a transition corresponding to invocation $i$ returning value $y$. For simplicity, we assume that each method
has at most one parameter and at most one return value.
We assume that each label of a transition corresponding to a step of an invocation $i$ includes the invocation identifier $i$ and the control point (line number) $\ell$ of that step. In particular, call transitions include an initial control point $\ell_0$. Such a transition is called a \emph{step} of $i$ at $\ell$.

The set of executions of an object $O$ is denoted by $E(O)$. An execution of an object $O$ satisfies standard well-formedness conditions, e.g., each transition corresponding to returning from an invocation $i$ (labeled by $\ret{y}_i$ for some $y$) is preceded by a transition corresponding to invoking $i$ (labeled by $\call{M(x)}_i$, for some $M$ and $x$), and for every $i$ there is at most one transition labeled by a call action containing $i$, and at most one transition labeled by a return action containing $i$.

An object where every invocation returns immediately is called \emph{atomic}. Formally, we say that an object $O$ is \emph{atomic} when every transition labeled by $\call{M(x)}_i$, for some $M$ and $x$, in an execution (from $E(O)$) is immediately followed by a transition labeled by $\ret{y}_i$ for some $y$.

Correctness criteria like linearizability characterize sequences of call and return actions in an execution, called \emph{histories}. The history of an execution $e$, denoted by $\mathit{hist}(e)$, is defined as the projection of $e$ on the call and return actions labeling its transitions. The set of histories of all the executions of an object $O$ is denoted by $H(O)$. Call and return actions $\call{M(x)}_i$ and $\ret{y}_i$ are called \emph{matching} when they contain the same invocation identifier $i$. A call action is called \emph{unmatched} in a history $h$ when $h$ does not contain the matching return. A history $h$ is called \emph{sequential} if every call $\call{M(x)}_i$ is immediately followed by the matching return $\ret{y}_i$. Otherwise, it is called \emph{concurrent}. Note that every history of an atomic object is sequential.

\subsection{(Strong) Linearizability}

\emph{Linearizability}~\cite{HerlihyW1990} defines a relationship between
histories of an object and a given set of sequential histories,
called a \emph{sequential specification}.  The sequential specification can also be interpreted as an atomic object.
Therefore, given two histories $h_1$ and $h_2$, we use $h_1\sqsubseteq h_2$ to denote the fact that there exists a history $h_1'$ obtained from $h_1$ by appending return actions that correspond to some of the unmatched call actions in $h_1$ (completing some pending invocations)
and deleting the remaining
unmatched call actions in $h_1$ (removing some pending invocations), such that $h_2$ is a permutation of $h_1'$ that preserves the order between return and call actions, i.e.,
if a given return action occurs before a given call action in $h_1'$ then the same holds in $h_2$. We say that $h_2$ is a \emph{linearization} of $h_1$.
A history $h_1$ is called \emph{linearizable} w.r.t.~a sequential specification $\mathit{Seq}$ iff there exists a sequential history $h_2\in \mathit{Seq}$ such that $h_1\sqsubseteq h_2$. An execution $e$ is linearizable w.r.t. $\mathit{Seq}$ if $\mathit{hist}(e)$ is linearizable w.r.t. $\mathit{Seq}$. An object $O$ is linearizable w.r.t.~$\mathit{Seq}$ iff each history $h_1\in H(O)$ is linearizable w.r.t.~$\mathit{Seq}$.

Two objects $O_1$ and $O_2$ are called \emph{equivalent} when they are linearizable w.r.t. the same sequential specification $\mathit{Seq}$ and for every history $h\in \mathit{Seq}$, $H(O_1)$ contains a history linearizable w.r.t. $h$ iff $H(O_2)$ contains a history linearizable w.r.t. $h$.

\emph{Strong linearizability}~\cite{GolabHW2011} is a strengthening of linearizability that enables preservation of probability distributions in randomized programs using a certain object $O$ instead of an \emph{atomic} object equivalent to $O$. It also enables preservation of
more generic hyper-safety properties~\cite{AttiyaE2019}.
A set of executions $E\subseteq E(O)$ of an object $O$ is
called \emph{strongly linearizable} when
it admits linearizations that are consistent with linearizations of prefixes that belong to $E$ as well.
Formally, $E$ is strongly linearizable w.r.t.~a sequential specification 
$\mathit{Seq}$ iff there exists a function
$f:E\rightarrow \mathit{Seq}$ such that:
\begin{itemize}
\item for any execution $e\in E$, $\mathit{hist}(e)\sqsubseteq f(e)$, and
\item $f$ is prefix-preserving, i.e.,
for any two executions $e_1,e_2\in E$ such that
$e_1$ is a prefix of $e_2$, $f(e_1)$ is a prefix of $f(e_2)$.
\end{itemize}
An object is called \emph{strongly linearizable} when its entire set of executions $E(O)$ is strongly linearizable.

\subsection{Randomized Programs}

A \emph{program} $P(\objs)$ is composed of a number of processes that
invoke methods on a set of shared objects $\objs$.  Besides shared
object invocations, a process can also perform some local computation
(on some set of local variables), and use an
instruction $\grandom$, where $V$ is a subset of a domain of values
$\values$, to sample a value from $V$ uniformly at random.  This value
can be used, for instance, as an input to a method invocation.
The syntax used for local computation instructions is not important,
and we omit a precise formalization.

An \emph{execution} of a program $P(\objs)$
is an interleaving of steps taken by the processes it contains. 
A step can correspond to either 
\begin{itemize}
\item 
an interaction with a shared object in $\objs$, i.e., 
a method invocation, internal step of an object implementation, 
or returning from a method, or
\item 
a local computation in the program, e.g., an execution of $\grandom$, 
for some $V$.
\end{itemize}
As expected, the sequence of steps in an execution follows the control-flow in each process and the internal behavior of the shared objects in $\objs$ (whether they be implemented on top of a message-passing or shared-memory system).

The \emph{outcome} of a program execution is a mapping from shared object method invocations to the values they return in that execution. In order to relate outcomes in different executions of the same program $P(\objs)$, we assume that shared object method invocations in executions of $P(\objs)$ have unique identifiers that relate to the \emph{syntax} of $P(\objs)$. These identifiers can be defined, for instance, as a triple of a process id, the control point (line number) at which that invocation occurs, and the number of times this control point occurred in the past (in order to deal with looping constructs). Then, an outcome maps these identifiers to return values. An outcome of a program $P(\objs)$ is the outcome of an execution of $P(\objs)$.

Consider two sets of objects $\objs_1$ and $\objs_2$ for which there exists a bijection $\lambda$ that maps each object $O\in\objs_1$ to an \emph{equivalent} object $O'\in \objs_2$. Given a program $P(\objs_1)$, the program $P(\objs_2)$ is obtained by substituting every object $O\in\objs_1$ with the corresponding object $\lambda(O)\in \objs_2$.

\begin{proposition}
$P(\objs_1)$ and $P(\objs_2)$ have the same set of outcomes.
\end{proposition}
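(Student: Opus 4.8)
\medskip
\noindent\textit{Proof plan.}
The plan is to establish the two inclusions separately, and since the roles of $\objs_1$ and $\objs_2$ are symmetric it suffices to show that every outcome $o$ of $P(\objs_1)$ is also an outcome of $P(\objs_2)$. I would route the argument through the corresponding \emph{atomic} objects: first show that $o$ is realizable when every $O\in\objs_1$ is replaced by the atomic object $A_O$ having the same sequential specification $\mathit{Seq}_O$ (this is the direction that uses linearizability of $\objs_1$), and then show that such an atomic run can be re-implemented using the objects of $\objs_2$ (this is where the equivalence of $O$ with $\lambda(O)$ enters).

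For the first step, fix an execution $e_1$ of $P(\objs_1)$ with outcome $o$, and for each $O\in\objs_1$ let $h_O$ be $\mathit{hist}(e_1)$ restricted to the calls and returns of $O$; then $h_O\in H(O)$, so $h_O\sqsubseteq g_O$ for some $g_O\in\mathit{Seq}_O$. I would then invoke the locality of linearizability: because the $h_O$ are all projections of the single execution $e_1$, the return-before-call order of $e_1$ together with the per-object total orderings induced by the $g_O$ is acyclic, so the $g_O$ merge into one sequential schedule $\hat e$. One checks that $\hat e$ is a legal execution of the program with the atomic objects $A_O$: the return values along $\hat e$ are exactly those of $e_1$, hence the program's control flow, the arguments it passes, and the results of its $\grandom$ instructions are unchanged; each $A_O$ exhibits exactly the sequential history $g_O$; and $\hat e$ respects program order and the return-before-call order of $e_1$. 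Thus $o$ is an outcome of the program with the atomic objects; this is the usual ``linearizability preserves safety properties'' argument, specialized to the property ``outcome $o$ is reached''.

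For the second step, I would use the equivalence object by object. Since $h_O\in H(O)$ is linearizable w.r.t.\ $g_O$, $H(O)$ contains a history linearizable w.r.t.\ $g_O$; and because $O$ is equivalent to $O'=\lambda(O)$, $H(O')$ also contains some $h'_O\sqsubseteq g_O$. As $h'_O$ and $g_O$ agree on the completed operations and their return values, $O'$ can reproduce the calls and returns that $A_O$ performs in $\hat e$. I would then build the target execution $e_2$ of $P(\objs_2)$ by replaying $\hat e$, expanding each atomic step of $A_O$ into the corresponding segment of internal steps of $h'_O$, scheduled so that, relative to the other processes and objects, the operations on $O'$ still occur in the order fixed by $\hat e$. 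Along $e_2$ the program observes the same returns as in $\hat e$, so it follows the same control flow and issues the same calls; hence $e_2$ is legal and has outcome $o$.

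The step I expect to be the main obstacle is making this last construction rigorous. The delicate points are: (i) checking that the operations the program actually issues to $O'$ in $e_2$ coincide with those over which $h'_O$ is defined, which needs an induction along the execution using that the next call is determined by the returns seen so far (all pinned to agree with $g_O$), and, dually, that each operation the program completes can also be completed on the $\objs_2$ side; (ii) checking that the internal steps of the various $h'_O$ can be interleaved into one schedule without violating return-before-call constraints, which holds because they range over disjoint operation sets and the only cross-object constraints are the return-before-call ones already respected by $\hat e$; and (iii) the bookkeeping for operations left pending in $e_1$, which carry no return value and so must be kept pending (or truncated away) in $\hat e$ and in $e_2$ as well, so that the recovered outcome is exactly $o$ and not a proper extension of it. None of these is conceptually deep, but they are what a full proof must spell out.
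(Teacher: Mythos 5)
The paper offers no proof of this proposition at all: it is stated as an immediate consequence of the definition of equivalence, so there is nothing to compare line by line. Your outline is essentially the intended justification — extract per-object linearizations $g_O\in\mathit{Seq}$ from an execution of $P(\objs_1)$, merge them via locality of linearizability into one sequential schedule with the same outcome, then re-realize that schedule over $\objs_2$ using the ``iff'' clause of equivalence — and the first step (linearizability preserves the reachable outcomes, with pending invocations either completed as the linearization dictates or dropped) is standard and fine.

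The place where your plan has a genuine hole, and it is a bit deeper than your items (i)--(iii), is the replay in step 2. Equivalence, as defined, only guarantees that $H(\lambda(O))$ \emph{contains some} history $h'_O$ linearizable w.r.t.\ $g_O$; it gives you no control over which processes perform the invocations of $h'_O$ or over its overlap structure. If, say, $h'_O$ has two invocations overlapping that the program issues sequentially from a single process (the second call is only emitted after the first return has been observed), then $h'_O$ simply cannot arise in any execution of $P(\objs_2)$, and no interleaving of its internal steps repairs this, because the call/return pattern is part of the history itself. So ``expanding each atomic step of $\hat e$ into the corresponding segment of $h'_O$'' is not yet a construction. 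To close the gap you must either argue that $h'_O$ can be \emph{chosen} compatible with the program's per-process call pattern (e.g., sequential, with the right invocation-to-process attribution), or lean on the operational assumptions the paper makes implicitly about implementations — methods may be invoked at any time by any process not already executing one, and implementations are deterministic — under which driving $\lambda(O)$ with the program's calls, one operation at a time in the order of $g_O$, reproduces $g_O$'s return values. Your bookkeeping point about pending invocations is correct and also needed so that the recovered outcome is exactly $o$ rather than a proper extension of it.
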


\subsection{Adversaries}

We say that a program execution \emph{observes} a sequence of random values $\vec{v}$ if the $i$-th occurrence of a step that samples a random value
(by
executing a $\grandom$ instruction) returns $\vec{v}[i]$, where $\vec{a}[i]$ is the $i$-th position in a vector $\vec{a}$.
A \emph{schedule} is a sequence of process ids. An execution \emph{follows} a schedule $\vec{s}$ when the $i$-th step of the execution is executed by the process $\vec{s}[i]$. In the following, we assume \emph{complete} schedules that make the program terminate.
We denote by $\exec{P(\objs)}{\vec{v}}{\vec{s}}$ the unique execution of a program $P(\objs)$ that observes $\vec{v}$ and follows $\vec{s}$.

For a program $P(\objs)$, a \emph{(strong) adversary} $A$ against $P(\objs)$ is a mapping from sequences of values in $\values$ to complete schedules. We assume that for every two sequences $\vec{v_1},\vec{v_2}\in\values^*$
 that have a common prefix of length $m$, the executions $\exec{P(\objs)}{\vec{v_1}}{A(\vec{v_1})}$ and $\exec{P(\objs)}{\vec{v_2}}{A(\vec{v_2})}$ are the same until the $(m+1)$-th occurrence of a step that samples a random value, or the end of the execution if no such steps remain.  This assumption captures the constraint that the scheduling decisions of a strong adversary do not depend on future randomized choices. A strong adversary $A$ defines a set of executions $\execs(A)$, each of which observes a sequence of values $\vec{v}$ and follows the schedule $A(\vec{v})$.

An adversary $A$ against $P(\objs)$ defines a probability distribution over program outcomes (of executions in $\execs(A)$), denoted by $\prob{P(\objs)}{A}$. Given a set of outcomes $\outs$, $\probOf{P(\objs)||A\rightarrow\outs}$ is the probability defined by $\prob{P(\objs)}{A}$ of an outcome being contained in $\outs$. The \emph{probability of $P(\objs)$ reaching $\outs$}, denoted by $\probOf{P(\objs)\rightarrow\outs}$, is defined as the maximal probability $\probOf{P(\objs)||A\rightarrow\outs}$ over all possible adversaries $A$.
In the context of our results, the set of outcomes $\outs$ is interpreted as some set of ``bad'' states, and the goal is to minimize the probability of a program reaching them.

The following result shows that a program using atomic objects minimizes the probability of reaching a set of outcomes, among programs where the atomic objects can be replaced with equivalent ones. This follows from the fact that an adversary can restrict itself to schedules where each method invocation is executed in isolation (a method can be called only when there is no other pending call), and the outcomes obtained in executions following such schedules can also be obtained with executions of atomic objects.
For a set of objects $\objs$, $\objs_a$ is the set of \emph{atomic} objects $O'$ that are equivalent to objects $O\in\objs$.

\begin{proposition}\label{prop:atomic_bound}
For any program $P(\objs)$ and set of outcomes $\outs$,
$\probOf{P(\objs)\rightarrow\outs} \geq \probOf{P(\objs_a)\rightarrow\outs}$.
\end{proposition}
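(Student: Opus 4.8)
The plan is to prove the inequality by showing that every adversary $A_a$ against the atomic program $P(\objs_a)$ can be reproduced by an adversary $A$ against $P(\objs)$ that reaches $\outs$ with at least the same probability. Since $\probOf{P(\objs)\rightarrow\outs}$ is the maximum of $\probOf{P(\objs)||A'\rightarrow\outs}$ over all adversaries $A'$, this gives $\probOf{P(\objs)\rightarrow\outs}\ge\probOf{P(\objs_a)||A_a\rightarrow\outs}$, and maximizing over $A_a$ yields the claim. The required direction is thus \emph{from} the atomic program \emph{to} the implementation: the implementation adversary is free to adopt schedules under which each operation runs atomically, and I would use exactly this freedom to reproduce the behavior of $A_a$.

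The construction of $A$ from $A_a$ is to run every operation \emph{in isolation}. Concretely, for each random sequence $\vec{v}$ I would take the atomic execution $\exec{P(\objs_a)}{\vec{v}}{A_a(\vec{v})}$, in which every operation appears as a call immediately followed by its return, and define the schedule $A(\vec{v})$ for $P(\objs)$ by replacing each atomic call/return pair with the sequence of process steps that runs the corresponding implementation method of $\objs$ to completion with no steps of any other invocation interleaved, while scheduling all local-computation and random-sampling steps exactly as $A_a$ does. The key claim to establish is that, for each fixed assignment of the program's random choices, the resulting isolated execution of $P(\objs)$ produces the same program outcome as the atomic execution of $P(\objs_a)$.

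To justify the claim I would argue as follows. In an isolated execution every invocation completes before the next begins, so its history is totally ordered in real time and has no pending calls; since the implementation is linearizable, the only permutation of this history preserving the return-before-call order is the history itself, so it lies in $\mathit{Seq}$. Because each $O\in\objs$ is equivalent to its atomic counterpart $O'\in\objs_a$, the sequential histories realizable by isolated runs of $O$ coincide with those realizable by $O'$, which lets me reproduce the atomic execution operation by operation. Proceeding by induction along the two executions, equal return values so far force equal control flow and local state, hence the next invocation's identity and argument agree, so the invocations and their return values coincide throughout and the two outcomes are equal. Thus $A$ and $A_a$ induce the same distribution over outcomes, giving $\probOf{P(\objs)||A\rightarrow\outs}=\probOf{P(\objs_a)||A_a\rightarrow\outs}$.

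I expect the main obstacle to be the bookkeeping of random values under the strong-adversary consistency constraint: $A$ must be a genuine function of $\vec{v}$ whose scheduling decisions never depend on future samples, and one must verify that any randomness internal to the implementations is consumed strictly inside the isolated method runs and leaves their specification-determined return values unchanged, so that the induced outcome depends only on the program's own random choices and the two distributions match exactly. A secondary subtlety is the inductive matching of return values when $\mathit{Seq}$ is nondeterministic, where equivalence must be invoked to choose an isolated implementation run realizing precisely the atomic return value at each step.
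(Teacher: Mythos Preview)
Your proposal is correct and follows essentially the same approach as the paper: the paper's justification (given in the sentence preceding the proposition rather than as a formal proof) is precisely that an adversary against $P(\objs)$ can restrict itself to schedules in which each invocation runs in isolation, and that the outcomes of such isolated executions coincide with those of the atomic objects. Your plan spells out this idea in more detail---constructing, for each atomic adversary $A_a$, the corresponding isolated-schedule adversary $A$ and invoking equivalence to match return values step by step---but the underlying argument is the same.
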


Algorithm~\ref{algorithm:simplified-weakener} is an example of a program $P$ where $\probOf{P(\objs)\rightarrow\outs}$ is strictly greater than $\probOf{P(\objs_a)\rightarrow\outs}$ (see Appendix~\ref{section:overview}). In this case, $\objs$ consists of two instances of the ABD register, one for $R$ and one for $C$, and $\outs$ is the set of outcomes where the return values of $p_2$'s invocations satisfy $u_1=c$ and $u_2=1-c$. These values make $p_2$ not terminate.

The two probabilities in Proposition~\ref{prop:atomic_bound}
are equal when $\objs$ is a set of strongly linearizable objects:

\begin{theorem}[{\cite{GolabHW2011}}]\label{th:strong_lin}
For any program $P(\objs)$ using a set of strongly linearizable objects $\objs$, and set of outcomes $\outs$,
$\probOf{P(\objs)\rightarrow\outs} = \probOf{P(\objs_a)\rightarrow\outs}$.
\end{theorem}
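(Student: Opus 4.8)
The plan is to show the two inequalities $\probOf{P(\objs)\rightarrow\outs} \geq \probOf{P(\objs_a)\rightarrow\outs}$ and $\probOf{P(\objs)\rightarrow\outs} \leq \probOf{P(\objs_a)\rightarrow\outs}$ separately. The first direction is immediate from Proposition~\ref{prop:atomic_bound}, which holds for any set of objects, so all the work is in the second inequality. For that, fix an arbitrary adversary $A$ against $P(\objs)$; I want to construct an adversary $A_a$ against $P(\objs_a)$ such that $\probOf{P(\objs_a)||A_a\rightarrow\outs} \geq \probOf{P(\objs)||A\rightarrow\outs}$, which suffices since $\probOf{P(\objs_a)\rightarrow\outs}$ takes a maximum over adversaries.

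The heart of the argument is to use strong linearizability of $\objs$ to transform executions of $P(\objs)$ into executions of $P(\objs_a)$ in a way that preserves outcomes and, crucially, is consistent with the prefix structure that a strong adversary must respect. First I would invoke strong linearizability: there is a prefix-preserving function $f$ mapping executions of each object $O\in\objs$ to sequential histories of the corresponding atomic object, respecting the history of the execution. The key step is to lift this to whole program executions: given an execution $e = \exec{P(\objs)}{\vec{v}}{A(\vec{v})}$, project it onto each shared object to get an object execution, apply $f$ to linearize it, and then reassemble a program execution $e'$ over $\objs_a$ in which each method invocation occurs atomically at the point dictated by the linearization order of $f$, while the process-local steps (including the $\grandom$ steps, which still observe $\vec{v}$) occur in their original order. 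Because $f$ respects the real-time order between returns and calls within each object's history and the program's control flow is preserved, $e'$ is a well-formed execution of $P(\objs_a)$ observing $\vec{v}$; its outcome equals the outcome of $e$ because the linearization assigns each invocation its original return value. One then defines $A_a(\vec{v})$ to be the schedule of $e'$, and uses prefix-preservation of $f$ to check the strong-adversary consistency condition: if $\vec{v_1},\vec{v_2}$ agree on a prefix of length $m$, then $e = \exec{P(\objs)}{\vec{v_1}}{A(\vec{v_1})}$ and $\exec{P(\objs)}{\vec{v_2}}{A(\vec{v_2})}$ agree until the $(m+1)$-th random step, hence their per-object prefixes agree, hence by prefix-preservation of $f$ the corresponding atomic executions $e'$ agree up to the same point, so $A_a$ is a legitimate strong adversary. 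Since $\vec{v}\mapsto e'$ is a bijection-like correspondence preserving the distribution on $\vec{v}$ and preserving outcomes, the probability of landing in $\outs$ is preserved, giving $\probOf{P(\objs_a)||A_a\rightarrow\outs} = \probOf{P(\objs)||A\rightarrow\outs}$.

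I expect the main obstacle to be making the "reassembly" step rigorous: one is given a separate linearization for each object, and must weave them back together with the local program steps into a single legal execution of $P(\objs_a)$ whose control flow is exactly that of $e$. The subtlety is that a linearization of one object may reorder an invocation relative to steps that, in $e$, were interleaved with invocations of a \emph{different} object; one must argue that placing each atomic invocation anywhere within the interval between its call and return in $e$ (as $f$'s real-time-order condition guarantees) never violates the program's data or control dependencies, since each process is sequential and sees each of its own invocations return before proceeding. Handling unmatched (pending) calls — which $f$ is allowed to either complete or drop — requires care: completing them in $e'$ with some return value, or dropping them, must still yield an execution whose outcome agrees with $e$ on the matched invocations that determine membership in $\outs$. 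A clean way to organize this is to process the steps of $e$ in order, and whenever a return transition $\ret{y}_i$ of an object $O$ is reached, insert the atomic call/return pair for $i$ at the position dictated by $f$'s linearization order — relying on the fact that $f$ respects return-before-call order to see that this position lies before the current point and after $i$'s original call.
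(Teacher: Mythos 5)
Theorem~\ref{th:strong_lin} is stated in the paper as an imported result of Golab, Higham and Woelfel~\cite{GolabHW2011}, with no proof given, so there is no in-paper proof to compare against line by line. Your plan is the standard argument for that result, and it closely parallels the technique this paper does spell out for Lemma~\ref{lem:main1} in the tail-strong setting: the direction $\geq$ is Proposition~\ref{prop:atomic_bound}, and for $\leq$ you simulate a given strong adversary $A$ against $P(\objs)$ by an adversary against $P(\objs_a)$ whose schedules follow the (prefix-preserving) linearizations, using prefix preservation at each program $\mathsf{random}$ step to certify that the simulated adversary's scheduling does not depend on future coin flips and that outcomes are preserved. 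In substance this is correct.

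Two places deserve tightening. First, with several shared objects you should invoke the \emph{locality} of strong linearizability (proved in~\cite{GolabHW2011}; the paper states the tail-strong analogue as Theorem~\ref{th:local}) to obtain a single prefix-preserving linearization function for the union of the objects; the ``reassembly obstacle'' you describe is essentially a re-proof of locality, and hand-weaving per-object linearizations is where an informal argument is most likely to go wrong. Second, your recipe of inserting an atomic call/return pair only when the return transition of $i$ is reached does not cover invocations that are linearized \emph{before} a cut point (e.g., a write whose effect a completed read has already observed) but are still pending there; the canonical reassembly should place, before the $(m+1)$-th program $\mathsf{random}$ step, exactly the operations in the linearization of the common prefix --- prefix preservation is precisely what guarantees this set is determined by that prefix --- and pending invocations that the linearization completes can either be dropped (they contribute nothing to the outcome on the matched invocations that determine membership in $\outs$) or completed atomically with the simulated adversary simply not scheduling that process afterwards. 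With these repairs your outline yields the theorem.
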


%!TEX root = main.tex
\section{Tail Strong Linearizability}\label{sec:tls}

We define a generalization of strong linearizability, called \emph{tail strong linearizability}, which requires that executions be mapped to prefix-preserving linearizations only when each method invocation has executed a minimal number of steps called a \emph{preamble}. The relationship between linearizations of different executions where some invocation has \emph{not} executed its preamble fully is unconstrained.  When the preamble of every invocation is ``empty'' (i.e., it includes only the call transition), this becomes the standard notion of strong linearizability. When the preamble of every invocation is ``full'' (i.e., it includes all the steps of the invocation), this is equivalent to standard linearizability (since linearizability requires anyway that any invocation $i$ is linearized before any other invocation $i'$ that starts after $i$ returns).
Section~\ref{section:ABD} defines a preamble-iterating transformation of tail strongly linearizable objects that limits the increase in the probability of a bad outcome
when a program uses the transformed objects instead of equivalent atomic
objects.

Let $O$ be an object with a set of methods $\meths$. A \emph{preamble mapping} $\cp$ of $O$ is a mapping that associates each method $M\in \meths$ with a control point $\ell$ representing the last step of its preamble. We assume that every control-flow path of $M$ should pass through $\ell$ and that $\ell$ can be reached only once (it is not inside the body of a loop). The trivial preamble mapping that associates each method to the initial control point $\ell_0$ is denoted by $\cp_0$.  For instance, for the multi-writer version of ABD (listed in Algorithm~\ref{algorithm:ABD} and described in the introduction), we are interested in a preamble mapping that associates the \textbf{Read} and \textbf{Write} methods with the control points where the value with the largest timestamp received from responses to query messages is assigned (Lines~\ref{line:ReadControl} and~\ref{line:WriteControl},
respectively, in Algorithm~\ref{algorithm:ABD}).

Given an execution $e$ and a method invocation $i$, we say that $i$ \emph{passed} a control point $\ell$ when $e$ contains a step of $i$ at $\ell$.
An execution $e$ is \emph{complete} w.r.t. a preamble mapping $\cp$ if each invocation of a method $M$ in $e$ passed the control point $\cp(M)$. The set of executions of $O$ complete w.r.t. $\cp$ is denoted by $E(O,\cp)$.

An object $O$ is called \emph{tail strongly linearizable} w.r.t.~a preamble mapping $\cp$ and a sequential specification $\mathit{Seq}$ when it is linearizable w.r.t. $\mathit{Seq}$ and the set of executions $E(O,\cp)$ is strongly linearizable w.r.t.~$\mathit{Seq}$. Note that strong linearizability is equivalent to tail strong linearizability w.r.t.~$\cp_0$.

When reasoning about programs that use more than one object, we rely on the fact that tail strong linearizability is \emph{local} in the sense that it holds for the union of a set of objects that are each tail strongly linearizable.  Locality holds for tail strong linearizability as a straightforward consequence of the fact that standard strong linearizability is local~\cite{GolabHW2011}.

\begin{theorem}\label{th:local}
A set of histories $H$ of executions with multiple objects $O_1$,$\ldots$,$O_m$ is tail strongly linearizable w.r.t. some preamble mapping $\cp_1\cup\ldots\cup\cp_m$, where $\cp_j$ is a preamble mapping of $O_j$,
iff for all $j$,
$1\leq j\leq m$, the set $H_j=\{h | O_j: h\in H\}$, where $h | O_j$ is the projection of $h$ on call and return actions of $O_j$, is tail strongly linearizable w.r.t. $\cp_j$.
\end{theorem}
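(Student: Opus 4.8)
The plan is to reduce tail strong linearizability to ordinary strong linearizability and then invoke the known locality result of Golab, Higham and Woelfel for strong linearizability. The crucial observation is that for a set of histories $H$ of executions with multiple objects, tail strong linearizability w.r.t.\ $\cp_1\cup\ldots\cup\cp_m$ is, by definition, the conjunction of (i) linearizability of $H$ w.r.t.\ the product specification, and (ii) strong linearizability of the subset of $H$ consisting of histories that are complete w.r.t.\ $\cp_1\cup\ldots\cup\cp_m$. Since linearizability is itself local, part (i) already decomposes into per-object linearizability; the work is to show that part (ii) decomposes into per-object strong linearizability of the projected complete histories.

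First I would make precise how completeness interacts with projection: a history $h\in H$ is complete w.r.t.\ $\cp_1\cup\ldots\cup\cp_m$ iff every invocation of every object has passed its preamble control point, which holds iff for each $j$ the projection $h|O_j$ is complete w.r.t.\ $\cp_j$ (an invocation belongs to exactly one object, and passing a control point is a property visible in the projection onto that object, since by our conventions the preamble control point $\cp(M)$ is reached at a step labeled with that invocation's identifier and object). Hence the set of complete histories of $H$ projects, object by object, onto exactly the sets of complete histories $E(O_j,\cp_j)$-histories inside $H_j$, and conversely a history of $H$ whose every projection is complete is itself complete. Next I would observe that $H_j$ is tail strongly linearizable w.r.t.\ $\cp_j$ iff $O_j$ (restricted to $H_j$) is linearizable and the complete sub-collection of $H_j$ is strongly linearizable. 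So the theorem's left-hand and right-hand sides each unfold into a linearizability clause and a strong-linearizability clause, and it remains only to match the strong-linearizability clauses.

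For that matching step I would apply the locality of ordinary strong linearizability (stated in~\cite{GolabHW2011}) to the sub-collection $H^c\subseteq H$ of complete histories: $H^c$ is strongly linearizable w.r.t.\ the product specification iff for every $j$ the projection $\{h|O_j : h\in H^c\}$ is strongly linearizable w.r.t.\ $\mathit{Seq}_j$. By the correspondence established in the previous paragraph, $\{h|O_j : h\in H^c\}$ is precisely the complete sub-collection of $H_j$, so this is exactly the per-object strong-linearizability clause. Combining the two unfoldings with this equivalence, and with the (also local) linearizability clauses, yields the stated biconditional. The one subtlety worth spelling out is that the prefix-preserving linearization function $f$ witnessing strong linearizability of $H^c$ must be built from, or decomposed into, per-object functions $f_j$ in a way compatible with prefixes; but this is exactly what the proof of locality of strong linearizability in~\cite{GolabHW2011} provides when one passes through the per-object linearization points, so I would not reprove it.

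I expect the main obstacle to be purely bookkeeping rather than conceptual: verifying that ``passing a preamble control point'' is genuinely a projection-stable property, i.e.\ that the notion of a step of invocation $i$ at control point $\ell$ survives restriction to the sub-execution interacting with $O_j$, and that no interleaving with other objects can cause a history in $H$ to be complete without each projection being complete or vice versa. This requires carefully invoking the well-formedness conventions (each invocation has a unique identifier tied to a single object, and $\cp(M)$ is outside any loop so it is passed at most once) set up in the Preliminaries, after which the reduction to~\cite{GolabHW2011} is immediate.
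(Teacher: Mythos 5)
Your proposal takes essentially the same route as the paper: the paper gives no detailed argument for Theorem~\ref{th:local}, merely observing that locality of tail strong linearizability is a straightforward consequence of the locality of standard strong linearizability~\cite{GolabHW2011}, which is precisely the reduction you carry out (unfold the definition into a linearizability clause plus strong linearizability of the complete sub-collection, note that completeness w.r.t.\ $\cp_1\cup\ldots\cup\cp_m$ decomposes per object, and invoke the known locality results for linearizability and strong linearizability). In fact your write-up supplies more of the bookkeeping than the paper itself does, but there is no difference in approach.
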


%!TEX root = main.tex

\newcommand{\objTr}[2]{{O_{#2}^{#1}}}

\section{Blunting an Adversary Against Tail Strongly Linearizable Objects}
\label{section:ABD}

We define a methodology for transforming tail strongly linearizable
objects whose preambles have a certain property we call ``effect-free''
into equivalent objects.
The use of the transformed objects can reduce the probability that a
program using the objects reaches a set of (bad) outcomes.
Intuitively, the transformed objects can blunt the power of any
adversary against a program using them and in the limit
restrict its power to what it has when the program uses atomic objects
(which is a lower bound by Proposition~\ref{prop:atomic_bound}).
As we show in Section~\ref{app:tasl_objects},
the class of objects to which the transformation applies includes
a broad set of widely-used objects,
including the ABD register (both its original single-writer
version~\cite{AttiyaBD1995} as well as the multi-writer
version~\cite{LynchS1997}),
the atomic snapshot algorithm using single-writer registers
of Afek et al.~\cite{AfekADGMS1993},
the Vit\'{a}nyi and Awerbuch algorithm to construct a multi-writer register
from single-writer registers~\cite{VitanyiA1986},
and the Israeli and Li algorithm to construct a multi-reader register
from single-reader registers~\cite{IsraeliL1993}.
None of these implementations is strongly linearizable
and in fact strongly-linearizable implementations are known to be
impossible in most of these cases (see Section~\ref{section:related}).

\subsection{The Preamble-Iterating Transformation for
Tail Strongly Linearizable Objects}\label{ssec:transf_tls}

\newlength\myindent
\setlength\myindent{1em}
\newcommand\bindent{%
  \begingroup
  \setlength{\itemindent}{\myindent}
  \addtolength{\algorithmicindent}{\myindent}
}
\newcommand\eindent{\endgroup}

The preamble-iterating transformation is defined in Algorithm~\ref{algorithm:O-star}.
For a given integer $k\geq 1$, object $O$, and preamble mapping $\cp$, we define an object
$\objTr{k}{\cp}$ (we may omit the preamble mapping $\cp$ from the notation when it is understood from the context) where each method $M$ is replaced with a method $M^k$ that iterates the preamble of $M$ (see the \textbf{for} loop
in Algorithm~\ref{algorithm:O-star}) $k$ times and uses the values of a randomly chosen iteration for the rest of the code.
To simplify the notations, we assume that the code of each preamble of a method $M$ (the code up to and including the control point $\cp(M)$) is encapsulated in a function called \textsc{preamble} that takes the same input as $M$ and returns the values of $M$'s local variables after executing that preamble. These values are stored in the array $\mathit{locals}$. The rest of the code, which uses the values in $\mathit{locals}$, is left unchanged. The results of the preamble iterations are stored in a two dimensional array $\overrightarrow{\mathit{locals}}$ where each row has the same size as $\mathit{locals}$. For the ABD register, the ABD$^k$ object is listed in Algorithm~\ref{algorithm:ABD-star} (Appendix~\ref{section:overview}).

This transformation leads to an equivalent object provided that the preamble contains only \emph{effect-free} computation, which does not affect the behavior of the other processes running concurrently (effect-free computation can affect the state of the process that executes it). For instance, the preamble of ABD's \textbf{Read} and \textbf{Write} methods consists in sending ``query'' messages to the other processes, waiting for replies, and computing the largest timestamp value from the replies (the queryPhase function in Algorithm~\ref{algorithm:ABD}). Sending a reply to a query message from another concurrently running process does not affect the behavior of the sender, as its local variables remain unchanged.

\begin{wrapfigure}{r}{0.55\textwidth}
\begin{minipage}{0.55\textwidth}
\begin{algorithm}[H]
\caption{\small
Transforming a tail strongly linearizable object $O$ to $O^k$,
$k\geq 1$.
Each method $M$ of $O$ is transformed to a method $M^k$ of $O^k$.}
\label{algorithm:O-star}
\begin{algorithmic}[0]
\STATE{method \textbf{M($v$):}}
\bindent
\STATE{ $\mathit{locals}$ := \textsc{preamble}({\em v})}
\STATE{ // rest of the code $\ldots$ }
\eindent

\STATE method {\bf M$^k$($v$):}
\bindent
\FOR{$i$ := 1 \TO $k$ }
\STATE{ $\overrightarrow{\mathit{locals}}[i]$ := \textsc{preamble}($v$)}
\ENDFOR
\STATE $j$ := random([1..k])
\STATE $\mathit{locals}$ :=  {\em $\overrightarrow{\mathit{locals}}[j]$}
\STATE{ // rest of the code $\ldots$}
\eindent
\end{algorithmic}
\end{algorithm}
\end{minipage}
\end{wrapfigure}

In general, a computation step of an object implementation is either
\begin{itemize}
\item
an invocation to a method of a base object, e.g.,
a register, which is assumed to be \emph{atomic}, or
\item
a \emph{send/receive} step in the context of a message-passing system, or
\item a local computation step on some set of local variables
(which cannot be accessed by other processes).
\end{itemize}
A computation step is called \emph{effect-free} if it is a local
computation step, or, if in the first case,
the invoked method itself is effect-free, e.g., a \textbf{Read} method of an atomic register, or if in the second case, it is a receive or a send of a message that does not modify the local state of the receiving process, e.g., sending a ``query'' message in the ABD register. For a preamble mapping $\cp$, we say that a method $M$ has an \emph{effect-free preamble} if all the computation steps up to and including $\cp(M)$ are effect-free. An object is said to have \emph{effect-free preambles} iff all its methods have effect-free preambles.

It can be easily proved that $O^k$ is equivalent to $O$, provided that
$O$ has effect-free preambles.
We also assume that the original tail strongly
linearizable objects are deterministic, i.e., they do not rely on
randomization.  Indeed, by definition, repeating the effect-free
preamble has no effect on local states of other processes.  Each
execution of $O^k$ can be transformed to an execution of $O$ where all
the preamble repetitions that are not ``used'' in an invocation (i.e.,
the value they compute is not selected to continue the computation)
can be simply removed. Since the original $O^k$ execution has exactly
the same history as the one of $O$, its linearizability w.r.t. the
specification of $O$ follows from the linearizability of the execution
of $O$. Conversely, every execution of $O$ can be transformed to an
execution of $O^k$ by ``appending'' sufficiently many repetitions of
the preamble and restricting the random choice to select the first
repetition.

\begin{theorem}\label{th:abd-star-lin}
For every object $O$ with effect-free preambles and $k\geq 1$, $O^k$ is equivalent to $O$.
\end{theorem}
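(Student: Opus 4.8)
The plan is to establish the equivalence $O^k \equiv O$ by constructing, in both directions, execution-translating maps that preserve histories, and then invoking the definition of equivalence (same sequential specification, and matching of which specification histories are ``realized''). The key structural fact driving everything is that, since the preamble is effect-free, a repetition of the preamble leaves the states of all other processes, all base objects, and all messages in transit unchanged up to the local state of the process performing it, so preamble repetitions can be inserted or deleted from an execution without affecting anything observable by the rest of the system, including the projection onto call and return actions.

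First I would fix notation: given a method $M$ of $O$ with preamble ending at $\cp(M)$, recall that $M^k$ runs \textsc{preamble}$(v)$ exactly $k$ times, stores the results in $\overrightarrow{\mathit{locals}}[1..k]$, draws $j := \mathsf{random}([1..k])$, sets $\mathit{locals} := \overrightarrow{\mathit{locals}}[j]$, and then runs the unchanged tail. For the direction ``every history of $O^k$ is a history of $O$ linearizable w.r.t.\ $\mathit{Seq}$'': take an execution $e^k \in E(O^k)$. For each invocation $i$ of a method $M^k$ in $e^k$, it has (at most) $k$ completed preamble runs (or fewer if $i$ is still pending somewhere inside the \textbf{for} loop) and possibly a selected index $j_i$. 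Build $e$ from $e^k$ by deleting, for each invocation $i$ that has already drawn $j_i$, all steps belonging to the preamble runs $\overrightarrow{\mathit{locals}}[i']$ with $i' \neq j_i$, and for each invocation still inside its \textbf{for} loop, deleting all but (say) the first preamble run's steps and the selection step; also delete the $\mathsf{random}([1..k])$ step itself and the array-copy bookkeeping, which are local. Here I would invoke the effect-free hypothesis to argue that the remaining interleaving is a legal execution of $O$: each deleted step was effect-free, so removing it does not change the global state seen by any step we keep, hence every kept transition is still enabled in the new global state sequence; and the surviving sequence runs \textsc{preamble} exactly once (the $j_i$-th, relabeled as the preamble of $M$ in $O$) followed by the identical tail code, which is by definition the code of $M$ in $O$. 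Since we deleted only call/return-free steps (preamble steps are internal, not call/return transitions of $O^k$; the only call and return transitions are the outer $\call{M(x)}_i$ and $\ret{y}_i$), we have $\mathit{hist}(e) = \mathit{hist}(e^k)$, so linearizability of $e^k$ w.r.t.\ $\mathit{Seq}$ follows from that of $e$, which holds because $O$ is linearizable w.r.t.\ $\mathit{Seq}$.

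For the converse direction, given $e \in E(O)$ I would construct $e^k \in E(O^k)$ with the same history by, for each invocation $i$ of $M$, ``replaying'' its preamble $k-1$ extra times immediately before its real preamble steps --- inserting, right at the point where $i$'s preamble currently runs, $k-1$ additional effect-free preamble runs (their steps interleaved however is convenient, e.g.\ serialized), inserting the step $\mathsf{random}([1..k])$ returning the value $k$ so that the \emph{last} (original) run is the selected one, and the array bookkeeping steps --- and then running the unchanged tail. Again by effect-freeness these insertions do not disturb enabledness of any original step; the result is a legal execution of $O^k$ whose history equals $\mathit{hist}(e)$. This shows $H(O)$ and $H(O^k)$ linearize the same sequential histories: combined with the forward direction, for every $h\in\mathit{Seq}$, $H(O)$ contains a history linearizable w.r.t.\ $h$ iff $H(O^k)$ does, and both are linearizable w.r.t.\ $\mathit{Seq}$, so by definition $O$ and $O^k$ are equivalent.

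The main obstacle I anticipate is making the ``deletion/insertion of effect-free steps preserves executions'' argument rigorous at the level of the labeled-transition semantics: one must confirm that an effect-free step's only influence is on the acting process's own local state and control point (so after deletion, the acting invocation's subsequent preserved steps must be re-indexed/re-attributed — in the forward direction the $j_i$-th preamble run must be rebound as ``the'' preamble of $M$, matching control points $\ell_0,\dots,\cp(M)$), and that receive steps of \emph{other} processes are unaffected because an effect-free send does not alter the receiver — but one must be careful about message-passing: a ``query'' send creates a message in transit, and the matching reply is itself effect-free on the querying process, yet we are deleting whole preamble runs including their query sends \emph{and} the replies they would trigger; one has to check that those replies (sent by other processes) are themselves effect-free on their senders and that no kept step depended on them, which is exactly the content of ``effect-free preamble.'' Handling pending invocations (those that have not yet drawn $j$, or not yet finished the \textbf{for} loop) cleanly, so that $\mathit{hist}(e^k)=\mathit{hist}(e)$ including unmatched calls, is the remaining fiddly point; the paper's own sketch already gestures at ``simply removing'' unused repetitions, so I expect the formal write-up to be a routine but careful induction on execution length.
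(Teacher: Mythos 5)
Your proposal is correct and follows essentially the same route as the paper's own (brief) argument: executions of $O^k$ are mapped to executions of $O$ by deleting the unused, effect-free preamble repetitions (preserving the history, hence linearizability), and executions of $O$ are mapped to executions of $O^k$ by inserting extra effect-free preamble runs and fixing the random choice to select the original one. Your write-up merely spells out in more detail the enabledness and pending-invocation issues that the paper's sketch glosses over; the only cosmetic difference is that the paper appends the extra repetitions and selects the first, while you prepend them and select the last.
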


\subsection{Quantifying the Blunting Power}\label{ssec:quantify}

We characterize the power of $O^k$ objects in lowering the probability that a program $P$ using them reaches some set $\outs$ of outcomes, compared to $P$ using the original objects $O$ instead. Since we interpret $\outs$ as ``bad'' states, lowering this probability is desirable.

For a set of objects $\objs$, $\objs^k$ is the set of objects $O^k$
with $O\in\objs$. While stating the result below, the program $P$ and
the set of outcomes $\outs$ are fixed (but arbitrary), and to simplify
the notation, we write $\probOf{\objs}$ instead of
$\probOf{P(\objs)\rightarrow\outs}$, for any set of objects
$\objs$. Also, we say that a program $P(\objs)$ has \emph{at most $r$
  $\mathsf{random}$ steps} if every execution of $P$ contains at most
$r$ steps corresponding to executing a $\mathsf{random}$
instruction.
This definition applies to programs using objects
$\objs$ and not the transformed objects $\objs^k$ which introduce
additional $\mathsf{random}$ steps.

We show that $\probOf{\objs^k}$ decreases with respect to
$\probOf{\objs}$
as the number of preamble iterations $k$ increases and exceeds
the maximum number $r$ of $\mathsf{random}$ steps in the program.
This provides a trade-off between time complexity, which grows with
$k$, and the probability of reaching bad outcomes, which decreases
with $k$.  This result is based on a worst-case analysis which makes
no assumptions about the structure of the program.

\begin{theorem}\label{th:main}
For every program $P(\objs)$ with $n\geq 1$ processes and at most $r\geq 1$ $\mathsf{random}$ steps, where $\objs$ is a set of tail strongly linearizable objects with effect-free preambles, set of outcomes $\outs$,
\begin{align*}
\probOf{\objs^k} \leq \probOf{\objs_a} + \left[1 - \left(\frac{\max\{0,k-r\}}{k}\right)^{n-1}\right] \cdot \left(\probOf{\objs}-\probOf{\objs_a}\right).
\end{align*}
\end{theorem}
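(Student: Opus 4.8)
The plan is to fix an optimal adversary $A$ against $P(\objs^k)$ achieving $\probOf{P(\objs^k)\,||\,A\rightarrow\outs} = \probOf{\objs^k}$, and to decompose the executions in $\execs(A)$ according to whether or not the adversary manages to ``interfere'' with any operation's preamble. Concretely, call an operation invocation \emph{contaminated} in an execution if, during some iteration of its preamble that ends up being selected by the internal $\mathsf{random}$ choice $j := \mathsf{random}([1..k])$, the adversary schedules one of the program's $\mathsf{random}$ steps. The first step is to argue that on the event $G$ that \emph{no} operation is contaminated, the transformed execution behaves exactly like an execution using the tail strongly linearizable objects $\objs$ in a regime where the tail strong linearization is in force: every selected preamble is ``clean,'' so after the preamble each operation's linearization position is fixed and, by tail strong linearizability of $\objs$ (Theorem~\ref{th:local} giving locality for the union of objects, together with Theorem~\ref{th:strong_lin} applied to the set $E(O,\cp)$ of preamble-complete executions), the outcome distribution conditioned on $G$ coincides with that obtainable using $\objs_a$. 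Hence $\probOf{P(\objs^k)\,||\,A\rightarrow\outs \mid G} \leq \probOf{\objs_a}$.

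The second step is to bound $\mathbb{P}[\neg G]$, i.e., the probability that the adversary contaminates at least one operation, by $1 - (\max\{0,k-r\}/k)^{n-1}$. Here is the counting argument I would use. There are at most $r$ random steps in any execution of $P(\objs)$; pair each with the (at most one) pending operation per process at the moment it is scheduled. Since there are $n$ processes but the operation running on the process \emph{performing} that very $\mathsf{random}$ step cannot have its own preamble contaminated by it (the random step belongs to the program, interleaved between object steps of \emph{other} processes' operations — this is exactly the asymmetry that produces the $n-1$), each random step can ``touch'' at most $n-1$ processes' currently-pending operations. For a fixed process $q$ with a pending operation during some random step, contamination happens only if the uniform choice $j$ among the $k$ preamble iterations lands on one of the iterations that overlapped a random step; there are at most $r$ such bad iterations, so the probability that $q$'s operation avoids contamination from all random steps is at least $(k-r)/k$ when $k > r$, and trivially this is $\max\{0,k-r\}/k$. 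Crucially, by the strong-adversary constraint stated in the preliminaries — the scheduling decisions cannot depend on future random values, and in particular the internal $j := \mathsf{random}([1..k])$ of each operation is independent of which iterations the adversary chose to overlap with program randomness — these at-most-$n-1$ per-process choices are independent (or can be made so by a union-bound-free product argument), yielding $\mathbb{P}[G] \geq ((k-r)/k)^{n-1}$, hence $\mathbb{P}[\neg G] \leq 1 - (\max\{0,k-r\}/k)^{n-1}$.

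The final step is the law of total probability:
\begin{align*}
\probOf{\objs^k} &= \mathbb{P}[G]\cdot\probOf{P(\objs^k)\,||\,A\rightarrow\outs\mid G} + \mathbb{P}[\neg G]\cdot\probOf{P(\objs^k)\,||\,A\rightarrow\outs\mid \neg G}\\
&\leq \mathbb{P}[G]\cdot\probOf{\objs_a} + \mathbb{P}[\neg G]\cdot 1,
\end{align*}
and then I would replace the crude bound $\probOf{P(\objs^k)\,||\,A\rightarrow\outs\mid \neg G}\leq 1$ by the sharper $\probOf{P(\objs^k)\,||\,A\rightarrow\outs\mid \neg G}\leq \probOf{\objs}$. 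This last inequality needs a small argument: conditioned on $\neg G$, the transformed execution still projects to a legitimate execution of $P(\objs)$ (deleting the unused preamble repetitions, as in the proof of Theorem~\ref{th:abd-star-lin}), run under some adversary against $P(\objs)$, so its probability of hitting $\outs$ is at most $\probOf{\objs}$. Substituting and rearranging $\mathbb{P}[G]\probOf{\objs_a} + (1-\mathbb{P}[G])\probOf{\objs} = \probOf{\objs_a} + (1-\mathbb{P}[G])(\probOf{\objs}-\probOf{\objs_a})$ and using $1 - \mathbb{P}[G] \leq 1 - (\max\{0,k-r\}/k)^{n-1}$ (valid since $\probOf{\objs}-\probOf{\objs_a}\geq 0$ by Proposition~\ref{prop:atomic_bound}) gives the claimed bound.

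I expect the main obstacle to be the second step — specifically, making the independence of the per-process preamble-selection choices rigorous against a strong adversary, and nailing down exactly why the factor is $(n-1)$ rather than $n$. The subtlety is that the adversary sees each $j := \mathsf{random}([1..k])$ \emph{after} it is drawn and may reschedule subsequent steps accordingly, so one must carefully use the ``no dependence on future randomness'' clause to argue that, at the moment each program $\mathsf{random}$ step is scheduled, the set of preamble iterations it overlaps is already determined independently of the not-yet-drawn selection indices of the operations it could contaminate. A clean way to handle this is to reveal the randomness in the order the adversary's schedule consumes it and observe that each operation's selection index $j$ is drawn uniformly and independently of everything scheduled strictly before it, which is precisely when the overlaps with program randomness for that operation get fixed. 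The $(n-1)$ versus $n$ point is genuinely load-bearing and should be stated as an explicit lemma: at any point in an execution, at most $n-1$ distinct processes can have an operation whose \emph{currently executing} preamble iteration overlaps with a given program $\mathsf{random}$ step, because the process issuing that step is not simultaneously inside an object-implementation preamble.
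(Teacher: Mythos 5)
Your overall structure is exactly the paper's: your event $G$ is the paper's event $X$ (all object $\mathsf{random}$ steps select randomization-free preamble iterations), and your three steps correspond to the paper's three lemmas --- bounding the conditional probability on $X$ by $\probOf{P(\objs_a)\rightarrow\outs}$ via tail strong linearizability, locality, and an atomic adversary built iteratively from prefix-preserving linearizations at each program $\mathsf{random}$ step; bounding the $\neg X$ branch by $\probOf{P(\objs)\rightarrow\outs}$ via deleting unused preamble repetitions; and lower-bounding $\probOf{X}$. Your first and third steps are glossed relative to the paper (in particular, the paper first passes through an intermediate adversary $A_\objs$ against $P(\objs)$ that takes the probability-maximizing choice at each object $\mathsf{random}$ step, which is what turns ``the conditional distribution coincides with one obtainable with $\objs_a$'' into a clean inequality), but they are the same constructions you sketch.

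The genuine gap is in your bound on $\probOf{\neg G}$. The lemma you call load-bearing --- that at most $n-1$ distinct processes can have contaminated operations --- is true only per program $\mathsf{random}$ step (the issuing process is not inside a preamble at that moment); since different $\mathsf{random}$ steps may be issued by different processes, all $n$ processes can end up with contaminated iterations over the course of an execution. In that case your uniform per-process factor $(k-r)/k$, multiplied over the touched processes, yields exponent $n$, which is strictly weaker than the claimed $n-1$: the theorem's bound is still true, but your argument does not establish it. The correct accounting is that an operation of process $q$ can be contaminated only by $\mathsf{random}$ steps issued by \emph{other} processes, and each such step contaminates at most one currently-executing preamble iteration per other process; hence each invocation has at most $r$ contaminated iterations and the total over all invocations is at most $r(n-1)$. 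Minimizing the product of the per-invocation clean probabilities $(k-b_i)/k$ subject to $\sum_i b_i\leq r(n-1)$ and $b_i\leq r$ (by concavity of $\log(1-x/k)$ the minimum is at an extreme point) shows the worst case concentrates all contamination on $n-1$ invocations each hit by all $r$ steps --- exactly the scenario the paper identifies, namely all program $\mathsf{random}$ steps in one process, each overlapping a distinct preamble iteration of one invocation of each other process --- giving $\probOf{X}\geq\left(\frac{\max\{0,k-r\}}{k}\right)^{n-1}$. Your observation about revealing randomness in scheduling order to justify independence of the selection indices is fine and matches the paper's (tacit) independence claim.
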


Theorem~\ref{th:main} states that the probability of a bad outcome when
using objects in which the preamble is iterated $k$ times is at most the
probability when using atomic objects plus a fraction of the difference
between the probabilities when using atomic objects and when using the original
linearizable objects.
The fraction is, roughly speaking, the probability that the adversary
is able to manipulate the behavior to its advantage, and it
goes to 0 as $k$ increases, and thus the probability with the
preamble-iterated objects approaches the probability with atomic objects.

\subsection{Proof Outline for Theorem~\ref{th:main}}\label{ssec:proof}

We start by introducing some terminology.
The program $P(\objs^k)$ has two types of $\mathsf{random}$ instructions: the $\mathsf{random}$ instructions coming from the original program $P(\objs)$, which are outside of object implementations, and the $\mathsf{random}$ instructions added in the $\objs^k$ implementations (see Algorithm~\ref{algorithm:O-star}). The former are called \emph{program} $\mathsf{random}$ instructions, and the latter \emph{object} $\mathsf{random}$ instructions. Steps in an execution corresponding to program (object) $\mathsf{random}$ instructions are called program (object) $\mathsf{random}$ steps. Each method invocation in an execution of $P(\objs^k)$ performs $k$ iterations of a preamble (of some method of an object in $\objs$).
A preamble iteration is called \emph{randomization-free}
when it does \emph{not} overlap with a program $\mathsf{random}$ step, i.e., every program $\mathsf{random}$ step occurs either before or after all the steps of that preamble iteration.

Let $A$ be an adversary against $P(\objs^k)$ defining a probability distribution over executions/outcomes. Let $X$ be the event that \emph{all} the object $\mathsf{random}$ steps return indices that correspond to randomization-free preamble iterations. We decompose the probability of $A$ reaching a set of outcomes $\outs$ by conditioning on $X$:
\begin{align}\label{eq:cond1}
\probOf{P(\objs^k) || A\rightarrow\outs} = \probOf{( P(\objs^k) || A\rightarrow\outs)\ |\ X} \cdot \probOf{X} + \probOf{(P(\objs^k) || A\rightarrow\outs)\ |\ \neg X} (1-\probOf{X})
\end{align}

Lemma~\ref{lem:main1} (proved below)
shows that the probability of $A$ reaching $\outs$ conditioned on $X$
is upper bounded by the probability of any adversary reaching $\outs$
in the same program but with atomic objects instead of $\objs^k$.
That is,
$\probOf{( P(\objs^k) || A\rightarrow\outs)\ |\ X}\leq \probOf{P(\objs_a)\rightarrow\outs}$.
Lemma~\ref{lem:main2} (proved below)
shows that the probability of reaching $\outs$ with $\objs^k$
conditioned on $\neg X$ cannot be larger than
the probability of reaching $\outs$ with $\objs$,
i.e., $\probOf{ (P(\objs^k) || A\rightarrow\outs)\ |\ \neg X}
\leq \probOf{P(\objs)\rightarrow\outs}$.
Substituting
into (\ref{eq:cond1}), we get that
\begin{align}
\probOf{P(\objs^k) || A\rightarrow\outs} & \leq \probOf{P(\objs_a)\rightarrow\outs} \cdot \probOf{X} + \probOf{P(\objs)\rightarrow\outs} (1-\probOf{X})\label{eq:cond2} \\
& = \probOf{P(\objs_a)\rightarrow\outs} + \left(1 - \probOf{X}\right) (\probOf{P(\objs)\rightarrow\outs} - \probOf{P(\objs_a)\rightarrow\outs}) \nonumber
\end{align}
Lemma~\ref{lem:lb on prob of X} (proved below)
shows that $\probOf{X} \geq \left(\frac{\max\{0,k-r\}}{k}\right)^{n-1}$,
which
concludes the proof of the theorem.

%!TEX root = main.tex
%\section{Additional Lemmas for Section~\ref{ssec:proof}}\label{app:proofs}
\subsection{Detailed Proofs}\label{proofs}

\begin{lemma}\label{lem:main1}
$\probOf{( P(\objs^k) || A\rightarrow\outs)\ |\ X}\leq \probOf{P(\objs_a)\rightarrow\outs}$.
\end{lemma}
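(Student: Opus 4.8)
The goal is to show that conditioning the adversary $A$ against $P(\objs^k)$ on the event $X$ — that every object $\mathsf{random}$ step selects a randomization-free preamble iteration — yields a probability of reaching $\outs$ that is bounded by the best possible probability $\probOf{P(\objs_a)\rightarrow\outs}$ with atomic objects. The plan is to exhibit, for each fixed sequence of object random choices consistent with $X$, a way to rewrite the resulting (sub)probability space over program random choices as one arising from some adversary $A_a$ against $P(\objs_a)$, so that every outcome in $\outs$ under $A$ conditioned on $X$ is matched by an outcome in $\outs$ under $A_a$ with the same probability. Since $\probOf{P(\objs_a)\rightarrow\outs}$ is the supremum over all such $A_a$, the bound follows; in fact it should suffice to build a single $A_a$ that dominates the conditional distribution.

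First I would make the conditioning concrete: fix a vector of object random values $\vec{w}$ that lies in $X$, i.e., for each invocation the selected iteration index $j$ points to a preamble iteration that does not overlap any program $\mathsf{random}$ step. Under $X$, the remaining randomness is exactly the program random choices (at most $r$ of them), and $A$ together with $\vec{w}$ induces a well-defined adversary $A_{\vec{w}}$ on these program choices. The key structural observation — which I would isolate as the heart of the argument — is that once the \emph{used} iteration of each invocation's preamble is randomization-free, we can delete every \emph{unused} preamble iteration (they are effect-free, by hypothesis, hence removing them does not change any process's local state or the environment, by the same reasoning used for Theorem~\ref{th:abd-star-lin}) and obtain an execution of $P(\objs)$ whose used preambles each execute without interleaving any program random step. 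At that point I invoke \emph{tail strong linearizability} of $\objs$: restricted to executions in which every invocation has passed its preamble control point, $\objs$ admits a prefix-preserving linearization, so by the argument behind Theorem~\ref{th:strong_lin} (strong linearizability preserves the outcome distribution against a strong adversary) these executions are equivalent, outcome-for-outcome and probability-for-probability, to executions of the atomic program $P(\objs_a)$ under a corresponding adversary. The reason the preamble-overlap restriction matters is precisely that a program random step may only be scheduled \emph{outside} every used preamble, so the adversary against $P(\objs_a)$ can replicate the same scheduling of program random steps relative to the (now atomic) operations.

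Concretely, the steps in order are: (1) condition on a fixed $\vec{w}\in X$ and reduce to program randomness only; (2) via effect-freeness, map the $P(\objs^k)$-execution under $(A,\vec{w})$ to a $P(\objs)$-execution in which no used preamble overlaps a program random step and every invocation has passed its preamble; (3) use tail strong linearizability of $\objs$ and Theorem~\ref{th:strong_lin}'s mechanism to transfer this to a $P(\objs_a)$-execution under an adversary $A_a^{\vec{w}}$ with identical outcome, noting that the constraint ``no program random step inside a used preamble'' is exactly what lets $A_a^{\vec{w}}$ be a legal strong adversary that mimics $A$'s program-step scheduling; (4) average over $\vec{w}\in X$ — since each conditional-on-$\vec{w}$ outcome distribution on $\outs$ is bounded by $\probOf{P(\objs_a)\rightarrow\outs}$, so is the mixture, giving $\probOf{(P(\objs^k)||A\rightarrow\outs)\,|\,X}\leq\probOf{P(\objs_a)\rightarrow\outs}$.

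The main obstacle I anticipate is step (3): carefully defining the adversary $A_a^{\vec{w}}$ against $P(\objs_a)$ and checking it respects the strong-adversary non-anticipation constraint from the Preliminaries (its schedule on a program-random prefix of length $m$ must depend only on the first $m$ program random values). This requires that the bijection between $P(\objs)$-executions-with-completed-preambles and $P(\objs_a)$-executions supplied by tail strong linearizability be itself prefix-compatible in the right way — essentially re-deriving, in this relativized setting, the simulation argument underlying Theorem~\ref{th:strong_lin}. A secondary subtlety is bookkeeping the identifiers of object random steps so that ``the $\vec{w}$ consistent with $X$'' is a genuine measurable event and the conditional distribution is well-defined; I would handle this by the same identifier scheme (process id, control point, occurrence count) already fixed for program outcomes.
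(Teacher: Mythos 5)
Your overall strategy is essentially the paper's: strip the unused, effect-free preamble iterations to obtain executions of $P(\objs)$ (the same reasoning as behind Theorem~\ref{th:abd-star-lin}); observe that conditioning on $X$ forces every program $\mathsf{random}$ step to occur only when no invocation is in the middle of a (used) preamble, so that prefixes ending at program $\mathsf{random}$ steps are complete w.r.t.\ the preamble mappings; and then use tail strong linearizability (together with its locality, Theorem~\ref{th:local}) to build, program-random-step by program-random-step, prefix-preserving linearizations and hence an adversary against $P(\objs_a)$ whose outcome distribution matches, which is then dominated by $\probOf{P(\objs_a)\rightarrow\outs}$. Your step (3) is exactly the paper's iterative construction of the atomic adversary $A_{\objs_a}$, and you correctly identify it as the crux.

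Where you diverge from the paper, and where there is a genuine soft spot, is the handling of the object $\mathsf{random}$ choices. The paper does not condition on a fixed vector $\vec{w}$ of object random values: it builds a single intermediate adversary $A_\objs$ against $P(\objs)$ that mimics $A$ on program $\mathsf{random}$ steps (so the program coins remain uniformly distributed) and, at each object $\mathsf{random}$ step, substitutes the value maximizing the conditional probability of reaching $\outs$; only then does it convert $A_\objs$ into $A_{\objs_a}$ with an identical outcome distribution. Your decomposition ``fix $\vec{w}\in X$, bound each conditional law by $\probOf{P(\objs_a)\rightarrow\outs}$, average over $\vec{w}$'' runs into the problem that $X$ is not an event determined by $\vec{w}$ alone: whether the index selected by an object $\mathsf{random}$ step points to a randomization-free iteration depends on the schedule produced by $A$, hence on the program random values observed earlier. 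Consequently, conditioning on $X$ for a fixed $\vec{w}$ reweights the program coins, and the resulting conditional distribution need not be the outcome distribution of \emph{any} legal adversary against $P(\objs_a)$, since an adversary cannot bias the uniform $\mathsf{random}$ sampling. So the per-$\vec{w}$ domination asserted in your step (4) does not follow merely from exhibiting $A_a^{\vec{w}}$; you would need either the paper's construction (which defines the intermediate adversary over all program-value sequences, not only those compatible with $X$, and keeps the program coins uniform) or an additional argument that the conditioning does not skew the program randomness. The ``secondary subtlety'' you flag about invocation identifiers concerns measurability only and does not address this dependence of $X$ on the schedule.
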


\begin{proof}
Based on the adversary $A$, we will define an adversary $A_\objs$
against $P(\objs)$ that mimics the adversary $A$ against $P(\objs^k)$
conditioned on $X$ for program $\mathsf{random}$ steps and takes the
``best'' choice for object $\mathsf{random}$ steps, i.e., the choice
that maximizes the probability of reaching $\outs$.
$A_\objs$ will cause all the prefixes of executions in
$E(A_\objs)$ that end with a program $\mathsf{random}$ step to be
complete w.r.t. each preamble mapping
of an object in $\objs$.
The construction of $A_\objs$ will ensure that
\begin{align}\label{eq:adv1}
\probOf{( P(\objs^k) || A\rightarrow\outs)\ |\ X}\leq \probOf{P(\objs) || A_\objs\rightarrow\outs}
\end{align}
Then, we will use the completeness w.r.t. preamble mappings of
execution prefixes to show that
\begin{align}\label{eq:adv2}
\probOf{P(\objs) || A_\objs\rightarrow\outs} \leq \probOf{P(\objs_a)\rightarrow\outs}.
\end{align}
which will complete the proof.  Details follow.

Given a sequence $\vec{v}$ of values returned by program $\mathsf{random}$ steps, let $\vec{u}$ be a sequence of values returned by program or object $\mathsf{random}$ steps such that $\vec{v}$ is a subsequence of $\vec{u}$ and for all index $i$ in $\vec{u}$ representing the value of an object $\mathsf{random}$ step,
\begin{align}\label{eq:adv3}
\probOf{( P(\objs^k) || A\rightarrow\outs)\ |\ X\ |\ \vec{u}[\leq i]} = max_{v\in\values} \probOf{( P(\objs^k) || A\rightarrow\outs)\ |\ X\ |\ \vec{u}[\leq i-1]\cdot v}
\end{align}
where $\probOf{( P(\objs^k) || A\rightarrow\outs)\ |\ X\ |\ \sigma}$ is the probability that $A$ reaches $\outs$ in $P(\objs^k)$ conditioned on $X$, and further conditioned on the fact that the first $|\sigma|$ $\mathsf{random}$ steps return the values in $\sigma$ (in the order defined by $\sigma$), and $\vec{u}[\leq i]$ is the prefix of $\vec{u}$ of length $i$ (by convention, $\vec{u}[\leq -1]$ is the empty sequence $\epsilon$). The schedule $A(\vec{u})$ contains $k$ preamble iterations for each method invocation, but only one of them, determined by the result of the object $\mathsf{random}$ step in that invocation, is used to continue the computation. Let $\mathsf{remRedundant}(A(\vec{u}))$ be the schedule where all the $k-1$ preamble iterations that are not used in a method invocation are removed. By the definition of the $\objs^k$ objects, $\mathsf{remRedundant}(A(\vec{u}))$ is a schedule producing a valid execution of $P(\objs)$. We define
$$
A_\objs(\vec{v})=\mathsf{remRedundant}(A(\vec{u})).
$$
By the construction, property (\ref{eq:adv3}) in particular, we have that property (\ref{eq:adv1}) holds. Also, since we consider schedules of $A$ conditioned on $X$, all the preamble iterations selected by object $\mathsf{random}$ steps are randomization-free, and therefore, at every program $\mathsf{random}$ step in $\mathsf{remRedundant}(A(\vec{u}))$, there is no invocation that started but did not finished its preamble.

To prove property (\ref{eq:adv2}), we show that there exists an adversary $A_{\objs_a}$ against $P(\objs_a)$ such that $\prob{P(\objs)}{A_\objs}=\prob{P(\objs_a)}{A_{\objs_a}}$. We rely on the facts that each object in $\objs$ is tail strongly linearizable, that tail strong linearizability is local (cf. Theorem~\ref{th:local}), and that all the prefixes of executions in $E(A_\objs)$ ending
with a program $\mathsf{random}$ step are complete w.r.t. each preamble mapping of an object in $\objs$.
The adversary $A_{\objs_a}$ is defined iteratively by enumerating
program $\mathsf{random}$ steps. Initially, by the definition of an
adversary, all the executions produced by $A_{\objs}$ are identical
until the first occurrence $rs_1$ of a program $\mathsf{random}$
step. By tail strong linearizability, it is possible to define a
valid linearization (satisfying each object specification)
of the invocations that started before $rs_1$ which does not depend on
execution steps that follow $rs_1$ (i.e., this linearization can be
extended by appending more invocations when considering steps after
$rs_1$). Let $\sigma_0$ be such a linearization. We will impose the
constraint that all the executions produced by $A_{\objs_a}$ start
with $\sigma_0$.

Next, we focus on execution prefixes that end just before the second occurrence $rs_2$ of a program $\mathsf{random}$ step. Assume that $rs_1$ is a random choice between a set of values $V$ and let $v\in V$. Using again the definition of an adversary, all the executions produced by the restriction of $A_{\objs}$ to the domain $v\cdot \mathbb{V}^*$
(sequences of values starting with $v$) are identical until $rs_2$.
By tail strong linearizability, there exists a linearization $\sigma_v$ of 
the invocations that started before $rs_2$ in these executions 
such that $\sigma_0$ is a prefix of $\sigma_v$. 
Moreover, $\sigma_v$ can be chosen in such a way that it does not depend on execution steps that follow $rs_2$.  We define $A_{\objs_a}$ such that $A_{\objs_a}(v\cdot \mathbb{V}^*) \in \sigma_v\cdot Act^*$ for each $v\in V$ ($Act$ denotes the set of call/return actions in a history). That is, each execution that the adversary produces when the first program $\mathsf{random}$ step returns $v$ starts with the linearization $\sigma_v$.

Iterating the same construction for all the remaining 
program $\mathsf{random}$ steps, 
we get an adversary $A_{\objs_a}$ against $P(\objs_a)$ such that 
$A_{\objs_a}(\vec{v})$ is a linearization of the invocations 
in $A_{\objs}(\vec{v})$,  for all $\vec{v}$. 
Therefore, $\prob{P(\objs)}{A_\objs}=\prob{P(\objs_a)}{A_{\objs_a}}$, 
and property (\ref{eq:adv2}) holds.
\end{proof}

\begin{lemma}\label{lem:main2}
$\probOf{ (P(\objs^k) || A\rightarrow\outs)\ |\ \neg X}\leq \probOf{P(\objs)\rightarrow\outs}$.
\end{lemma}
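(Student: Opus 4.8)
The plan is to build, from the adversary $A$ against $P(\objs^k)$ conditioned on $\neg X$, an adversary $A'$ against $P(\objs)$ that reaches $\outs$ with at least the same probability. The key idea is that conditioning on $\neg X$ means at least one object $\mathsf{random}$ step in each relevant execution selects a preamble iteration that overlaps a program $\mathsf{random}$ step — but this is exactly the kind of behavior that an ordinary adversary against $P(\objs)$ (which has only one preamble per invocation) can already produce by scheduling the program $\mathsf{random}$ step inside that single preamble. So the transformed object does not give the adversary any power here that it did not already have against $P(\objs)$; the extra iterations are, from the point of view of the $\neg X$ branch, just scaffolding that can be stripped away.

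Concretely, I would proceed as follows. First, fix $A$ and consider the conditional distribution over executions of $P(\objs^k)$ given $\neg X$. For each such execution, apply a $\mathsf{remRedundant}$-style surgery: for every method invocation, keep only the preamble iteration actually selected by that invocation's object $\mathsf{random}$ step, and delete the other $k-1$ iterations. Because preambles are effect-free (Theorem~\ref{th:abd-star-lin} and the accompanying discussion), deleting the unused iterations yields a legal execution of $P(\objs)$ with the same program $\mathsf{random}$ values, the same schedule of the surviving steps, and the same outcome. This surgery defines $A'$: given a sequence $\vec{v}$ of program $\mathsf{random}$ values, $A'(\vec{v})$ is $\mathsf{remRedundant}$ applied to $A$'s schedule on the extended sequence $\vec{u}$ in which the object $\mathsf{random}$ steps are filled in with values consistent with $\neg X$ (and, as in Lemma~\ref{lem:main1}, chosen to maximize the probability of reaching $\outs$, or simply averaged — either works since we only need an inequality). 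One must check that $A'$ is a well-defined strong adversary: its scheduling decisions before the $(m+1)$-st program $\mathsf{random}$ step depend only on the first $m$ program $\mathsf{random}$ values, which follows because the same holds for $A$ on the extended sequences and $\mathsf{remRedundant}$ is a prefix-respecting operation.

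Having constructed $A'$, the outcome distribution of $P(\objs) \parallel A'$ equals (or dominates, under the max-choice variant) the outcome distribution of $P(\objs^k)\parallel A$ conditioned on $\neg X$, invocation by invocation, because the surviving steps and their order and the program randomness are identical and the effect-free deletions change nothing observable. Hence $\probOf{(P(\objs^k)\parallel A\rightarrow\outs)\mid\neg X} \leq \probOf{P(\objs)\parallel A'\rightarrow\outs} \leq \probOf{P(\objs)\rightarrow\outs}$, the last step by the definition of $\probOf{P(\objs)\rightarrow\outs}$ as a maximum over adversaries.

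I expect the main obstacle to be the bookkeeping that makes $A'$ a genuine strong adversary rather than a mere family of executions — in particular, verifying that the conditioning on $\neg X$ (an event about object randomness, which $A'$ does not see) can be pushed through cleanly. The event $\neg X$ says \emph{some} object $\mathsf{random}$ step hits an iteration overlapping a program $\mathsf{random}$ step, which is a disjunction over invocations and thus not as structurally simple as $X$; care is needed so that the choice of which $\vec{u}$ extends a given $\vec{v}$ is consistent across the prefix structure and does not depend on program randomness that $A'$ has not yet observed. This is analogous to the construction in Lemma~\ref{lem:main1} but one must be sure that no tail-strong-linearizability argument is accidentally invoked here — it is not needed, since effect-freeness of the preambles alone justifies the surgery, and linearizability of $P(\objs)$ itself is what ultimately bounds the resulting probability.
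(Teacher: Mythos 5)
Your proposal is correct and follows essentially the same route as the paper: define an adversary against $P(\objs)$ that mimics $A$ on program $\mathsf{random}$ steps, resolves each object $\mathsf{random}$ step by the choice maximizing the probability of reaching $\outs$, strips the unused preamble iterations via a $\mathsf{remRedundant}$-style surgery justified by effect-freeness, and then bounds by the supremum over adversaries (no tail strong linearizability needed). The bookkeeping about $\neg X$ that worries you is dispatched in the paper by simply observing that the max-choice construction is agnostic to the conditioning, since the maximum over object-random choices upper bounds the conditional probability under any event on the object randomness.
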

\begin{proof}
As in the proof of Lemma~\ref{lem:main1}, property (\ref{eq:adv1})
, one can define an adversary $A_\objs'$ against $P(\objs)$ that mimics the adversary $A$ against $P(\objs^k)$ for program $\mathsf{random}$ steps and takes the ``best'' choice for object $\mathsf{random}$ steps, i.e., the choice that maximizes the probability of reaching $\outs$. This argument is actually agnostic to the conditioning on $\neg X$, because it does not depend on
the specific results returned by object $\mathsf{random}$ steps from which to make a ``best'' choice.
 We include the conditioning only to match the proof goal coming from (\ref{eq:cond1}).
We have that
\begin{align}\label{eq:adv10}
\probOf{ (P(\objs^k) || A\rightarrow\outs)\ |\ \neg X}\leq \probOf{P(\objs) || A_\objs'\rightarrow\outs}
\end{align}

The result follows from the fact that $\probOf{ P(\objs) || A_\objs'\rightarrow\outs}\leq \probOf{P(\objs)\rightarrow\outs}$.
\end{proof}

\begin{lemma}\label{lem:lb on prob of X}
$\probOf{X} \geq \left(\frac{\max\{0,k-r\}}{k}\right)^{n-1}$.
\end{lemma}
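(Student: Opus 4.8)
The plan is to fix an arbitrary adversary $A$ against $P(\objs^k)$ and bound from below the probability that every object $\mathsf{random}$ step picks a randomization-free preamble iteration. The first observation is that it suffices to treat the case $k > r$, since otherwise the claimed bound is $0$ and there is nothing to prove. So assume $k > r$. The key combinatorial fact is that for each of the at most $n-1$ processes other than the one we might single out (or, more carefully, for each method invocation), the $r$ program $\mathsf{random}$ steps can ``contaminate'' only a limited number of that invocation's $k$ preamble iterations. Specifically, in any fixed execution, a single program $\mathsf{random}$ step lies at one point in the global step sequence, and it overlaps a given preamble iteration of a given invocation only if it falls strictly between the first and last step of that iteration; since the preamble iterations of one invocation are executed sequentially and do not overlap each other, one program $\mathsf{random}$ step can overlap at most one preamble iteration of that invocation. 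Hence across all $r$ program $\mathsf{random}$ steps, at most $r$ of the $k$ preamble iterations of any given invocation are non-randomization-free, leaving at least $k-r$ that are randomization-free.

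Next I would turn this per-invocation bound into a probability bound. The subtlety is that whether a preamble iteration overlaps a program $\mathsf{random}$ step may itself depend on earlier random choices (both program and object), because the adversary is adaptive. The clean way to handle this is to expose the object $\mathsf{random}$ steps one invocation at a time, in the order in which the corresponding invocations occur, and condition on the history so far. At the moment the object $\mathsf{random}$ step of a particular invocation is about to be taken, the prefix of the execution up to that point — in particular, which of this invocation's $k$ preamble iterations overlapped a program $\mathsf{random}$ step — is already determined (the preamble iterations of this invocation all precede its object $\mathsf{random}$ step). By the paragraph above, at least $k-r$ of the $k$ indices are ``good'' (randomization-free), so conditioned on any such prefix, the object $\mathsf{random}$ step, which is uniform on $[1..k]$ and independent of the prefix, lands on a good index with probability at least $(k-r)/k$. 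Multiplying these conditional bounds over all invocations that make an object $\mathsf{random}$ step, and using that there are at most $n-1$ processes each contributing some number of invocations — here one must be slightly careful: the bound in the theorem is $((k-r)/k)^{n-1}$, so the argument must in fact aggregate all invocations of a single process and argue the exponent is $n-1$, not the total number of invocations.

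The resolution of that last point — and the step I expect to be the main obstacle — is that we should not multiply over all invocations but rather observe that at most $n-1$ of the processes can have an invocation whose object $\mathsf{random}$ step is ``at risk'' at all: if a process $p$ has the property that none of the program $\mathsf{random}$ steps occurs during any of $p$'s preamble iterations, then every one of $p$'s object $\mathsf{random}$ steps is automatically fine regardless of its value. Since there are $r$ program $\mathsf{random}$ steps total and $n$ processes, but actually we need the finer statement that for each process the $r$ program $\mathsf{random}$ steps collectively spoil at most $r$ iterations \emph{per invocation} — and then one process can be ``protected'' — giving $n-1$ factors. I would make this precise by a union-bound-style argument: $\neg X$ occurs only if some object $\mathsf{random}$ step of some process hits a contaminated iteration; group the object $\mathsf{random}$ steps by process, bound the conditional probability that process $p$'s object $\mathsf{random}$ steps all avoid contamination from below by $(k-r)/k$ using the fact that the program $\mathsf{random}$ steps contaminate at most $r$ iterations of each of $p$'s invocations and the uniform independent choice, and observe that one of the $n$ processes — say the one executing a program $\mathsf{random}$ step, or whichever process owns the ``last'' relevant random step — contributes no constraint; then take the product over the remaining $n-1$ processes, each contributing a factor at least $(k-r)/k$, possibly raised to a power but bounded below by $(k-r)/k$ since $(k-r)/k \le 1$. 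Assembling these conditional lower bounds via the chain rule for probabilities yields $\probOf{X} \geq ((k-r)/k)^{n-1}$, which is exactly $\left(\frac{\max\{0,k-r\}}{k}\right)^{n-1}$ once the $k \le r$ case is folded back in.
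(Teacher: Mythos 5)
Your reduction to the case $k>r$ and your per-invocation observation---that a single program $\mathsf{random}$ step, being one point of the schedule, can overlap at most one of the $k$ sequentially executed preamble iterations of any fixed invocation, so at most $r$ of that invocation's iterations are contaminated and its object $\mathsf{random}$ step picks a randomization-free index with conditional probability at least $(k-r)/k$---are sound, and your chain-rule treatment of adaptivity plays the same role as the independence decomposition $\probOf{X}=\prod_i\probOf{X_i}$ in the paper. The gap is in the aggregation to the exponent $n-1$. Multiplying your per-invocation bounds yields $((k-r)/k)^{N}$, where $N$ is the number of invocations containing a contaminated iteration, and $N$ can be as large as $r(n-1)$ (each of the $r$ program $\mathsf{random}$ steps may hit a different invocation of each other process), which is strictly weaker than the claimed bound. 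Your two attempted repairs do not close this: (i) the claim that at most $n-1$ processes can be ``at risk'' is false once the program $\mathsf{random}$ steps are taken by more than one process---already for $n=2$, each process can contaminate an invocation of the other, so no single process is automatically unconstrained; and (ii) the monotonicity you invoke is reversed: since $(k-r)/k\le 1$, a per-process contribution of the form $((k-r)/k)^{m_p}$ with $m_p\ge 1$ is bounded \emph{above}, not below, by $(k-r)/k$.

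The missing ingredient is the per-process (rather than per-invocation) accounting on which the paper's worst-case analysis implicitly rests: at the instant a program $\mathsf{random}$ step occurs, each \emph{other} process is inside at most one preamble iteration, and the process taking the step is inside none; hence the total number of contaminated iterations summed over \emph{all} invocations of a fixed process $q$ is at most the number of program $\mathsf{random}$ steps taken by processes other than $q$, so at most $r$. Combining this with the elementary inequality $\prod_i\frac{k-c_i}{k}\ge\frac{k-\sum_i c_i}{k}$ collapses all invocations of one process into a single factor at least $(k-r)/k$, and the fact that a process is harmed only by the \emph{other} processes' $\mathsf{random}$ steps shows the product over processes is minimized when all $r$ steps belong to a single process. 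This recovers exactly the paper's worst case---$n-1$ invocations with $\probOf{X_i}=(k-r)/k$ and $\probOf{X_j}=1$ for the rest---and hence the bound $\left(\frac{\max\{0,k-r\}}{k}\right)^{n-1}$; without this per-process step your argument does not reach it.
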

\begin{proof}
Since the random choices in $\objs^k$ method invocations are independent, we have that
$
\probOf{X} = \prod_{i}\probOf{X_i}
$
where $X_i$ is the event that the $i$-th object $\mathsf{random}$ step in an invocation to a method of $\objs^k$ chooses a randomization-free preamble iteration (we assume an arbitrary but fixed total order on invocations in $P$).
The minimal value for $\probOf{X}$ can be attained by making many $\probOf{X_i}$ as small as possible.
To minimize the sum of $\probOf{X_i}$ terms, we need that each $\mathsf{random}$ step overlaps with a maximum number of preamble iterations, i.e., one preamble iteration from each other process. Then, to maximize the number of small $\probOf{X_i}$ terms, we need to maximize the number of invocations that contain a maximal number of preamble iterations overlapping with a $\mathsf{random}$ step.
These two constraints can be attained assuming that all 
program $\mathsf{random}$ steps are in the same process 
and each one of them overlaps with a different preamble 
iteration from the same invocation of each other process.
If $k \le r$, the adversary can ensure that no object random step
returns an index that corresponds to a randomization-free preamble
iteration, which is the reason for the use of the max function.
Therefore,
for $n-1$ invocations $i$,
\begin{align*}
\probOf{X_i}\ =\ \frac{\max\{0,k-r\}}{k}
\end{align*}
and $\probOf{X_j}=1$ for the rest of the invocations $j$.
Therefore,
\begin{align*}
\probOf{X} \geq \left(\frac{\max\{0,k-r\}}{k}\right)^{n-1}
\end{align*}
\end{proof}

%!TEX root = main.tex
\section{Examples of Tail Strongly Linearizable Objects}\label{app:tasl_objects}

We discuss several objects introduced in the literature that
are \emph{not} strongly linearizable,
but are tail strongly linearizable with respect to
some non-trivial, effect-free preamble mapping.

\subsection{ABD Register}

Variations of the ABD implementation of a register in a
crash-prone message-passing system are used in many applications.
Unfortunately, it is impossible to have a strongly
linearizable version of ABD~\cite{AttiyaEW2021,ChanHHT2021arxiv}.
However, as we show next, our transformation is applicable to ABD.

Specifically, we show that the multi-writer variant~\cite{LynchS1997} of the
ABD register~\cite{AttiyaBD1995}
(which is listed in Algorithm~\ref{algorithm:ABD} in the appendix
and explained in the introduction)
is tail strongly linearizable w.r.t. the preamble mapping $\cp_{ABD}$
that associates \textbf{Read} and \textbf{Write} with the control
points Lines~\ref{line:ReadControl} and~\ref{line:WriteControl},
respectively.  These are the control points of the steps that assign
the return value of queryPhase to $(v,u)$ and $(-,(t,-))$,
respectively.

\begin{theorem}\label{ABD:TSL}
The ABD object in Algorithm~\ref{algorithm:ABD} is tail strongly linearizable w.r.t.~$\cp_{ABD}$.
\end{theorem}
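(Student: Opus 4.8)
The plan is to show that the multi-writer ABD register satisfies the two requirements in the definition of tail strong linearizability w.r.t.\ $\cp_{ABD}$: (i) ordinary linearizability w.r.t.\ the register specification, and (ii) strong linearizability of the restricted execution set $E(\mathrm{ABD},\cp_{ABD})$ consisting of those executions in which every \textbf{Read} and \textbf{Write} invocation has passed the control point where queryPhase returns. Part (i) is the classical correctness argument for multi-writer ABD and I would simply invoke it (the linearization order is determined by the $(\text{timestamp},\text{writer-id})$ pairs, with writes ordered by their assigned timestamps and reads placed immediately after the write whose value they return). The real work is part (ii), where I need to exhibit an explicit prefix-preserving linearization function $f : E(\mathrm{ABD},\cp_{ABD}) \to \mathit{Seq}$.

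The key idea is that in ABD, once an invocation $i$ has executed its queryPhase and reached $\cp_{ABD}(i)$, the timestamp $t_i$ that will govern its linearization point is already fixed: a \textbf{Write} commits to using $t = \max(\text{queried timestamps})+1$ paired with its own id, and a \textbf{Read} commits to returning the value with the maximum queried timestamp and to using that timestamp. So for an execution $e \in E(\mathrm{ABD},\cp_{ABD})$ I would define $f(e)$ by: order all invocations that have passed $\cp_{ABD}$ according to their committed $(t_i, \text{id}_i)$ pairs (lexicographically), breaking remaining ties between a read and the write it reads from by placing the read right after that write, and — crucially — the ordering of any two invocations is computable from the prefix of $e$ up to the point where the \emph{later} of their two $\cp_{ABD}$ steps occurs. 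I then need to check three things: $\mathit{hist}(e) \sqsubseteq f(e)$, i.e.\ $f(e)$ is a genuine linearization of $e$'s history (respects real-time order of non-overlapping operations, and is consistent with the register sequential spec); $f(e) \in \mathit{Seq}$; and prefix-preservation, i.e.\ if $e_1$ is a prefix of $e_2$ (both in $E(\mathrm{ABD},\cp_{ABD})$) then $f(e_1)$ is a prefix of $f(e_2)$. The last point follows because passing $\cp_{ABD}$ is monotone and the committed pair $(t_i,\text{id}_i)$ of an invocation never changes once $\cp_{ABD}(i)$ has been passed, so adding more steps only appends new invocations to the tail of the sequence without reordering existing ones; here I must be a little careful that a read's ``immediately after its source write'' placement is itself stable, which it is, since the source write is determined by the queried timestamp at $\cp_{ABD}$.

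The main obstacle I anticipate is establishing that $f(e)$ respects the real-time precedence order of $\mathit{hist}(e)$ together with the relevant freshness property of reads — in particular, verifying that if a \textbf{Write} $w$ returns before a \textbf{Read} $r$ is invoked, then $r$'s committed timestamp is $\ge t_w$ (so $r$ is linearized after $w$ and does not return a stale value), and symmetrically that two writes whose intervals are disjoint get ordered consistently with real time. This is exactly the standard ABD invariant that the update (second) phase propagates the timestamp to a majority, so any later queryPhase, which also contacts a majority, sees it; I would state this as a lemma about quorum intersection and the monotonicity of timestamps stored at each process, and then the real-time consistency of $f$ is immediate. Everything else — that $f$ maps into $\mathit{Seq}$, and prefix-preservation — is then essentially bookkeeping over the committed-pair ordering. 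I would also remark that when $\cp$ is instead the trivial mapping $\cp_0$, this function is \emph{not} prefix-preserving (an invocation's timestamp is not yet committed before queryPhase completes), which is precisely why ABD is not strongly linearizable but is tail strongly linearizable w.r.t.\ $\cp_{ABD}$.
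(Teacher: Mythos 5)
There is a genuine gap: the linearization function you propose is not prefix-preserving, and the step where you claim that ``adding more steps only appends new invocations to the tail of the sequence'' is false. The committed pair $(t_i,\mathrm{id}_i)$ of an invocation is indeed fixed once it passes $\cp_{ABD}$, but that does not prevent a \emph{future} invocation from committing to a \emph{smaller} pair and thus having to be inserted in the middle of the sequence you already output. Concrete counterexample: let $e_1$ contain a single pending \textbf{Write} $W_1$ by the process with id $5$ that has completed its query phase seeing only the initial timestamp $(0,0)$, so its committed pair is $(1,5)$; $e_1$ is complete w.r.t.\ $\cp_{ABD}$ and your $f(e_1)$ contains $W_1$. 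Extend $e_1$ to $e_2$ by letting the process with id $2$ invoke a \textbf{Write} $W_2$ and finish its query phase, also seeing only $(0,0)$ (possible, since $W_1$ has not yet sent any update messages); $W_2$ commits to $(1,2)<(1,5)$, so your $f(e_2)$ places $W_2$ before $W_1$ and $f(e_1)$ is not a prefix of $f(e_2)$. The same failure occurs with reads: a pending \textbf{Read} can adopt a timestamp that has reached only a minority of servers, and a later query phase can miss it and return something smaller. So ordering \emph{all} invocations that passed the preamble by their committed timestamps does not yield a prefix-preserving map; prefix-preservation requires not only that already-linearized invocations keep their relative order, but that nothing arriving later can ever be ordered before them, and the committed timestamp alone does not guarantee this.

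The paper's proof closes exactly this hole with an extra device: $f(e)$ contains only the \emph{logically-completed} invocations, namely those $o$ such that some invocation $o'$ that has actually \emph{returned} in $e$ satisfies $\mathsf{ts}(o)\leq \mathsf{ts}(o')$ (the other pending invocations are simply dropped, which linearizability permits). For such $o$, quorum intersection with the returned invocation's update phase guarantees that every invocation that is not logically-completed in $e$ --- in particular every future one --- obtains a timestamp at least $\mathsf{ts}(o)$, and the proof then handles the tie between a \textbf{Write} and \textbf{Read}s carrying the same timestamp (showing a write cannot be left out while a read with its timestamp is included) and checks that dropped invocations have not returned before included ones. Your quorum-intersection lemma and your ordering of logically relevant operations by timestamp are the right ingredients, and your observation about why $\cp_0$ fails is correct, but without the restriction to invocations ``anchored'' by some returned operation (or an equivalent mechanism) the prefix-preservation claim, which is the heart of the theorem, does not go through.
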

\begin{proof}
The timestamp of a \textbf{Read} invocation is the timestamp returned by its query phase (the value $u$ at line~\ref{line:ReadControl}), and the timestamp of a \textbf{Write} is the timestamp given as parameter to its update phase (the pair $(t+1,i)$ at line~\ref{line:WriteUpdate}). The timestamp of an invocation $o$ is denoted by $\mathsf{ts}(o)$.

Given an execution $e$ that is complete w.r.t. $\cp_{ABD}$, we say that an invocation $o$ is \emph{logically-completed} in $e$ when there exists an invocation $o'$ that returns in $e$ such that $\mathsf{ts}(o)\leq \mathsf{ts}(o')$. Since $o$ and $o'$ may coincide, if an invocation returns in $e$, then it is also logically-completed in $e$. By definition, every invocation in $e$ has a well-defined timestamp (since every invocation passed the query phase).

We define a function $f$ that associates to each such execution $e$ a linearization that contains all the invocations that are logically-completed in $e$ ordered according to their timestamp. A set of invocations in $e$ that have the same timestamp consists of exactly one \textbf{Write} invocation and some number of \textbf{Read} invocations. The linearization $f(e)$ orders the write before all the reads with the same timestamp, if any.

To show that $f$ is prefix-preserving, let $e,e'\in E(\text{ABD},\cp)$ such that $e$ is a prefix of $e'$. We show that a linearization of $e$ where invocations that are logically-completed in $e$ are ordered before invocations that are \emph{not} logically-completed is
consistent with an analogous linearization of $e'$.

For an invocation $o_1$ that is logically-completed in $e$, we show that $\mathsf{ts}(o_1)\leq \mathsf{ts}(o_2)$ for every invocation $o_2$ that is not logically-completed in $e$.
There are two cases to consider.
First, if $o_2$ queries after $e$, then we use the fact that ABD guarantees that the timestamp of an invocation is smaller than or equal to the timestamp returned by any query phase starting after that invocation returned. By the definition of logically-completed, there exists an invocation $o_1'$ that returns in $e$ such that $\mathsf{ts}(o_1)\leq \mathsf{ts}(o_1')$. Using the property of ABD mentioned above, we get that $\mathsf{ts}(o_1')\leq \mathsf{ts}(o_2)$, which implies that $\mathsf{ts}(o_1) \leq \mathsf{ts}(o_2)$.
Second, if $o_2$ queries during $e$, then by the definition of logically-completed, $\mathsf{ts}(o_2) > \mathsf{ts}(o_2')$ for every invocation $o_2'$ that returns in $e$. Since $o_1$ is logically-completed in $e$, we get that  there exists an invocation $o_1'$ that returns in $e$ such that $\mathsf{ts}(o_1)\leq \mathsf{ts}(o_1')$. Therefore, $\mathsf{ts}(o_1) \leq \mathsf{ts}(o_2)$.
Next, we show that there cannot exist a \textbf{Write} invocation $o_1$ that is \emph{not} logically-completed in $e$ while a \textbf{Read} invocation $o_2$ with the same timestamp is logically-completed in $e$. Clearly, $o_1$ cannot query after $e$ since $o_2$ queries during $e$ by definition. Assuming that both invocations query during $e$, we get a contradiction because the definition of logically-completed implies that $\mathsf{ts}(o_1) > \mathsf{ts}(o_1')$ for every invocation $o_1'$ that returns in $e$ and there exists an invocation $o_2'$ that returns in $e$ such that $\mathsf{ts}(o_2)\leq \mathsf{ts}(o_2')$. These two statements imply that $\mathsf{ts}(o_1) > \mathsf{ts}(o_2)$ which is a contradiction to the fact that $o_1$ and $o_2$ have the same timestamp.

Finally, note that an invocation $o_1$ that is \emph{not} logically-completed in $e$ cannot return before an invocation $o_2$ that is logically-completed in $e$. Since $o_2$ queries during $e$, this would imply that $o_1$ returns in $e$ which would imply that $o_1$ is logically-completed in $e$.
\end{proof}

The above result holds also for the original single-writer
version~\cite{AttiyaBD1995}, which is also not strongly
linearizable~\cite{HadzilacosHT2020arxiv4,ChanHHT2021arxiv}.

\subsection{Snapshot}

Another popular shared object is the atomic snapshot.
It is impossible to implement a strongly-linearizable lock-free snapshot
object using
single-writer registers~\cite{HelmiHW2012} and it is impossible to
implement a strongly-linearizable wait-free snapshot object using
multi-writer registers~\cite{DenysyukW2015}.
However, we show next that we can apply our transformation to
the linearizable wait-free snapshot implementation in~\cite{AfekADGMS1993},
which uses single-writer registers.

The snapshot object implementation of~\cite{AfekADGMS1993} uses an array \texttt{M} of registers whose length is the number of processes (accesses to these registers are atomic, i.e., they execute instantaneously). It provides a $\mathsf{Scan}()$ method that returns a snapshot of the array and an $\mathsf{Update}(v)$ method by which
a process $i$ writes value $v$ in \texttt{M}[$i$]. $\mathsf{Scan}$ performs a series of \emph{collects}, i.e., successive reads of the array's cells in some fixed order; a collect in a process can interleave with steps of other processes. This series of collects stops when either two successive collects return identical values,
or the process observes that another process has executed at least two $\mathsf{Update}$ invocations during the timespan of the $\mathsf{Scan}$. In the latter case, the return value is the last snapshot written by the other process during an $\mathsf{Update}$. An $\mathsf{Update}$ invocation at a process $i$ starts with a $\mathsf{Scan}$ followed by an atomic write to \texttt{M}[$i$] of the result of $\mathsf{Scan}$ together with the value received as argument (and a local sequence number seq$_i$ that is read in other $\mathsf{Scan}$ invocations).

This snapshot object is known to \emph{not} be strongly linearizable~\cite{GolabHW2011}, but it is tail strongly linearizable w.r.t. a preamble mapping that maps each $\mathsf{Scan}$ to the control point just before it returns and each $\mathsf{Update}$ to the initial control point. The linearization associated to an execution that is complete w.r.t. this preamble mapping contains all the (possibly pending) $\mathsf{Scan}$ invocations and all the $\mathsf{Update}$ invocations that performed their writes to the array cells, in some order consistent with the specification (each $\mathsf{Scan}$ is linearized after an $\mathsf{Update}$ if it observes its value). Actually, the preamble of $\mathsf{Update}$ can be defined in an arbitrary manner, e.g., extended until the end of its scan, and tail strong linearizability would still hold.
The reason is that
an $\mathsf{Update}$ is linearized only if it executed its write---
the scan it performs before the write is only to ensure progress (wait-freedom).
As can be seen in Section~\ref{section:ABD}, 
extending a preamble may help in reducing the probability 
of reaching ``bad'' outcomes, 
but this comes at a cost in terms of time complexity.

\subsection{Multi-Writer Multi-Reader Register}

Another central shared object is a multi-writer multi-reader register.
There is no strongly-linearizable wait-free implementation
of such a register using single-writer registers~\cite{HelmiHW2012}.
We show, however, that our transformation can be applied to the
linearizable implementation in~\cite{VitanyiA1986}.

In this implementation, each value written has a timestamp, which
is a pair consisting of an integer and a process identifier.
A single-writer register \textit{Val}[$i$] is
associated with each writer $i$ of the implemented register.
When a read is invoked on the implemented register, the reader reads
(value, timestamp) pairs from all the \textit{Val} registers, chooses the
value with the largest timestamp using lexicographic ordering, and returns
that value.
When a write of value $v$ on the implemented register is invoked at
writer $i$, the writer calculates a new timestamp and writes the value
together with the timestamp into \textit{Val}[$i$].
To calculate the new timestamp, $i$ reads all the \textit{Val}
variables and extracts from it the timestamp entry.
Its new timestamp is one plus the maximal timestamp of of all
other processes, together with its identifier.
This implementation is tail strongly linearizable by choosing
the preamble of the read method to end just before it returns
and the preamble of the
write method to end immediately before writing to \textit{Val}[$i$].
The tail strong linearizability proof is similar to the one for the ABD
register.

\subsection{Single-Writer Multi-Reader Register}

Yet another standard shared object is a (single-writer) multi-reader
register.
A well-known implementation of such a register using (single-writer)
single-reader registers is given in~\cite{IsraeliL1993}.
This implementation is not strongly linearizable, which can be shown
by mimicking the counter-example for the ABD register appearing
in~\cite{HadzilacosHT2020arxiv4}.
However, our transformation is applicable to this implementation, as
we show next.
(It seems likely that the argument in~\cite{ChanHHT2021arxiv} can be
adapted to show that it impossible to have a strongly-linearizable
implementation of a multi-reader register using single-reader registers,
as it is easy to simulate a message-passing channel with a single-reader
register.)

In the implementation, a single-reader register {\em Val[i]} is associated
with each reader $i$ of the implemented register.
To write a value $v$ to the implemented register, the (unique) writer
writes $v$, together with a sequence number, into all of the
{\em Val} registers.
The readers communicate with each other via a (two-dimensional) array {\em Report} of
single-reader registers, where reader $i$ writes to all the registers in
row $i$ and reads from all the registers in column $i$.
When a read of the implemented register is invoked at process $i$,
it reads (value, sequence number) pairs from {\em Val[i]} and from
all the registers in column $i$ of {\em Report}; it then
chooses the value to return with the largest sequence number, writes
this pair to all the registers in row $i$ of {\em Report}, and returns.
This implementation is tail strongly linearizable:  the
preamble of the read method ends just before the first write to
an element of {\em Report}, while the preamble of the write
method is empty.
As before, the proof of tail strong linearizability is similar to the
one for the ABD register.

%!TEX root = main.tex
\section{Related Work}
\label{section:related}

Golab, Higham and Woelfel~\cite{GolabHW2011} were the first
to recognize the problem when linearizable
objects are used with randomized programs, via an
example using the snapshot object implementation of~\cite{AfekADGMS1993}.
They proposed \emph{strong linearizability}
as a way to overcome the increased vulnerability
of programs using linearizable implementations to strong adversaries,
by requiring that the linearization order of operations
at any point in time be consistent with the linearization order
of each prefix of the execution.
Thus, strongly-linearizable implementations limit the adversary's
ability to gain additional power by manipulating
the order of internal steps of different processes.
Consequently, properties holding when a
concurrent program is executed with an atomic object,
continue to hold when the program is executed with a strongly-linearizable
implementation of the object.
Strong linearizability is a special case of our class of implementations,
where the preamble of each operation is empty and thus, vacuously, effect-free;
in this case, applying the preamble-iterating transformation results in no
change to the implementation.

Other than~\cite{AttiyaEW2021,ChanHHT2021arxiv}
which studied message-passing implementations, prior work on strong
linearizability focused on implementations using shared objects,
and considered various progress properties.
If one only needs {\em obstruction-freedom}, which requires an operation to
complete only if it executes alone, any object can be implemented
using single-writer registers~\cite{HelmiHW2012}.
When considering the stronger property of {\em lock-freedom} (or
{\em nonblocking}),
which requires that as long as some operation is pending, some
operation completes, single-writer registers are not sufficient
for implementing multi-writer registers, max registers, snapshots,
or counters~\cite{HelmiHW2012}.
If the implementations can use multi-writer registers, though, it
is possible to get lock-free implementations of max registers,
snapshots, and monotonic counters~\cite{DenysyukW2015}, as well as of
objects whose operations commute or overwrite~\cite{OvensW2019}.
It was also shown~\cite{AttiyaCH2018} that there is no
lock-free implementation of a queue or a stack from
objects whose readable versions have consensus number less than the
number of processes, e.g., readable test\&set.
For the even stronger property of {\em wait-freedom}, which requires every
operation to complete, is is possible to implement bounded max registers
using multi-writer registers~\cite{HelmiHW2012}, but it is impossible
to implement max registers, snapshots, or monotonic
counters~\cite{DenysyukW2015} even with multi-writer registers.
The bottom line is that the only known strongly-linearizable wait-free
implementation is of a bounded max register (using multi-writer registers),
while many impossibility results are known.

\emph{Write strong linearizability (WSL)}~\cite{HadzilacosHT2021} is a
weakening of strong linearizability designed specifically for register
objects. It requires that executions be mapped to linearizations where
only the projections onto write operations are prefix-preserving.  While
single-writer registers are trivially WSL, neither the original multi-writer
ABD nor the preamble-iterating version we introduce in this paper is
WSL~\cite{HadzilacosHT2021}.
The WSL implementation given in~\cite{HadzilacosHT2021} has effect-free
preambles, and so our transformation is applicable to it.
It is not known whether it
is possible to implement WSL multi-writer registers in crash-prone
message-passing systems.

Our approach draws (loose) inspiration from the vast research on
\emph{oblivious RAM} (\emph{ORAM}) (initiated in~\cite{GoldreichO1996}),
although the goals and technical details significantly differ.
ORAMs provide an interface through which a program can hide
its memory access pattern, while at the same time accessing
the relevant information.
More generally, \emph{program obfuscation}~\cite{BarakGIRSVY2012}
tries to hide (obfuscate) from an observer knowledge about the
program's functionality, beyond what can be obtained from its
input-output behavior.
The goal of ORAMs and program obfuscation is to hide information
from an adversary,
while our goal is to blunt the adversary's ability to
disrupt the program's behavior by exploiting linearizable implementations
used by the program.
We borrow, however, the key idea of introducing additional
randomization into the implementation,
in order to make it less vulnerable to the adversary.

\section{Discussion}

We have presented the preamble-iterating transformation for
a variety of linearizable object implementations,
e.g.,~\cite{AttiyaBD1995,AfekADGMS1993,IsraeliL1993,VitanyiA1986},
which approximately preserves the probability
of reaching particular outcomes,
when these implementations replace the corresponding atomic objects.
In this manner, it salvages randomized programs that use
these highly-useful objects---which do not have strongly-linearizable
implementations---so they still terminate,
without modifying the programs or their correctness proofs.
Furthermore, the transformation is mechanical, once the preamble is identified.

Our results are just the first among many new opportunities
for modular use of object libraries in randomized concurrent programs,
including the following exciting avenues for future research.

One direction is to improve our analysis and obtain better bounds,
specifically, by exploring the tradeoff between the increased
complexity of many repetitions of the preamble,
and decreased probability of bad outcomes.

It is also crucial to reduce the number of random steps considered
in the analysis, and at least, to bound them.
This can be done by making assumptions about the structure of the
randomized concurrent program.
For example, many randomized programs are round-based,
where each process takes a fixed (often, constant) number $s$
of random steps in each round,
and termination occurs with high probability
within some number of rounds, say $T$.
In this case, we can let the program run for $T$ rounds
and apply the preamble-iterating transformation with $k > T \cdot s$;
if the program does not terminate within $T$ rounds,
which happens with small probability,
the program just continues with the original, linearizable object.
An alternative approach for dealing with an unbounded number
of random steps is
to assume that the rounds are \emph{communication-closed}~\cite{DBLP:journals/scp/ElradF82},
resulting in a smaller number of random choices that could affect
the linearizable implementation.

Another direction is to consider other objects without wait-free
strongly-linearizable implementations,
e.g., queues or stacks~\cite{AttiyaCH2018},
which lack effect-free preambles that can be easily repeated.
For such objects, it might be possible to \emph{roll back} the effects
of repeating certain parts of their implementation.

\bibliographystyle{plain}
\bibliography{citations}

\appendix
\begin{algorithm}[tb]
\caption{ABD simulation of a multi-writer register in a message-passing system.}
\label{algorithm:ABD}
\begin{algorithmic}[1]
\STATE {\bf local variables:}
\STATE  {\em sn}, initially 0
    \COMMENT{for readers and writers, used to identify messages}
\STATE {\em val}, initially $v_0$
    \COMMENT{for servers, latest register value}
\STATE  {\em ts}, initially $(0,0)$
    \COMMENT{for servers, timestamp of this value, (integer, process id) pair}
\vspace*{.1in}
\begin{multicols*}{2}
\STATE {\bf function queryPhase():}
\STATE  {\em sn}++
\STATE  broadcast $\langle$"query",{\em sn}$\rangle$
\STATE  wait for $\geq \frac{n+1}{2}$ reply msgs to this query msg
        % (use {\em sn} to identify)
\STATE  ({\em v,u}) $:=$ pair in reply msg with largest timestamp
\STATE  return ({\em v,u})
\vspace*{.1in}
\STATE {\bf when $\langle$"query",{\em s}$\rangle$ is received from $q$:}
\STATE  send $\langle$"reply",{\em val,ts,s}$\rangle$ to $q$
\vspace*{.1in}
\STATE {\bf function updatePhase({\em v,u}):}
\STATE  {\em sn}++
\STATE  broadcast $\langle$"update",{\em v, u, sn}$\rangle$
\STATE  wait for $\geq \frac{n+1}{2}$ ack msgs for this update msg
        % (use {\em sn }to identify)
\STATE  return
\vspace*{.1in}
\STATE {\bf when $\langle$"update",{\em v,u,s}$\rangle$ is received from $q$:}
\STATE  if $u > ts$ then ({\em val},{\em ts}) $:=$ ({\em v,u})
\STATE  send $\langle$"ack",{\em s}$\rangle$ to $q$
\vspace*{.3in}
\STATE {\bf Read():}  \label{start-line}
\STATE  ({\em v,u}) $:=$ queryPhase()  \label{line:ReadControl}
\STATE  updatePhase({\em v,u})  \COMMENT{write-back}
\STATE  return {\em v}
\vspace*{.1in}
\STATE {\bf Write({\em v}) for process with id $i$:}
\STATE  $(-,(t,-)) :=$ queryPhase()  \COMMENT{just need integer in time\-stamp} \label{line:WriteControl}
\STATE  updatePhase({\em v},$(t+1,i)$) \label{line:WriteUpdate}
\STATE  return   \label{end-line}
\end{multicols*}
\end{algorithmic}
\end{algorithm}

%!TEX root = main.tex
%\section{Example: Problem with ABD, and How Blunting Helps}
\section{Case Study with ABD}
\label{section:overview}

This appendix presents a detailed case study of the benefits of
our preamble-iterating transformation when the program appearing in
Algorithm~\ref{algorithm:simplified-weakener} (presented in
Section~\ref{section:intro}) uses ABD registers.
This program is a simplified version of the {\em weakener}
program~\cite{HadzilacosHT2020randomized}, restricted to
only three processes, $p_0$, $p_1$, and $p_2$, that execute a single round.
We show in Section~\ref{ssec:weakener:atomic} that $p_2$ terminates with
probability at least 1/2 when the program uses atomic registers.
In contrast, we show in Section~\ref{ssec:app_ABD} that a strong
adversary can force $p_2$ to loop forever when the registers are
implemented using ABD.
Hadzilacos, Hu and Toueg~\cite{HadzilacosHT2020randomized}
showed that termination is
prevented in the weakener algorithm if the adversary has free rein to
choose the linearization points of the registers used; our example shows
an explicit execution using ABD that fails to terminate.
Since ABD is a tail strongly linearizable object with read-only preambles,
Theorem~\ref{th:main} implies that using ABD$^2$ in the program ensures
termination of $p_2$ with probability at least 1/8.
(ABD$^2$ is the special case of Algorithm~\ref{algorithm:ABD-star}
when $k = 2$; Algorithm~\ref{algorithm:ABD-star} is the result
of applying the transformation in Algorithm~\ref{algorithm:O-star}
to ABD, given in Algorithm~\ref{algorithm:ABD}.)
Section~\ref{subsec:abd2-prob} is devoted to a specialized analysis
that improves on the generic result and shows that $p_2$ terminates with
probability at least 3/8, indicating that there can be room for
improvement in our quantitative analysis.

\subsection{Success Probability with Atomic Registers}
\label{ssec:weakener:atomic}

We argue that with probability at least 1/2, process $p_2$ terminates,
when the program is using atomic registers;
this implies the same property when the program is composed with
strongly-linearizable registers (cf. Theorem~\ref{th:strong_lin}).

Let $u_1$, $u_2$, and $c$ be the values used in the test on
Line~\ref{testline}.
If $u_1 = \bot$, or $u_2 = \bot$, or $c = -1$, then
the test fails 
and $p_2$ terminates.

Suppose $u_1 \ne \bot$, $u_2 \ne \bot$, and $c \ne -1$.
Then at least one of $p_0$'s and $p_1$'s writes to $R$ precedes
$p_2$'s first read of $R$, and $p_0$'s write to $C$ precedes
$p_2$'s read of $C$.
If both writes to $R$ (by $p_0$ and $p_1$) precede
$p_2$'s first read of $R$, then $u_1 = u_2$, implying that the test
fails, since the common value cannot be equal to both $c$ and
$1-c$.  Therefore $p_2$ terminates.

Without loss of generality, suppose $p_0$'s write to $R$ precedes
$p_1$'s write to $R$.
Then the remaining situation is that $p_0$'s write to $R$ precedes
$p_2$'s first read of $R$ (so $u_1 = 0$), which precedes $p_1$'s
write to $R$, which precedes $p_2$'s second read of $R$ (so $u_2 = 1$).
With probability 1/2, $p_0$ writes 1 into $C$, which is then read by $p_2$,
so $c = 1$.  The test fails
since $u_1 = 0$, which is not equal to $c = 1$,
and $p_2$ terminates.

Thus, in the only situation in which $p_2$ is not guaranteed to terminate,
the probability of $p_2$ terminating is 1/2.

\subsection{Zero Success Probability with ABD Registers}
\label{ssec:app_ABD}

Next we explain how a strong adversary can force $p_2$ to loop forever
when the program uses linearizable registers,
and in particular, ABD registers, instead of atomic registers.
Each process runs a separate instance of ABD for each of the
shared variables $R$ and $C$.
(See Algorithm~\ref{algorithm:ABD}.)

\begin{figure}[t]
\includegraphics[scale=.35]{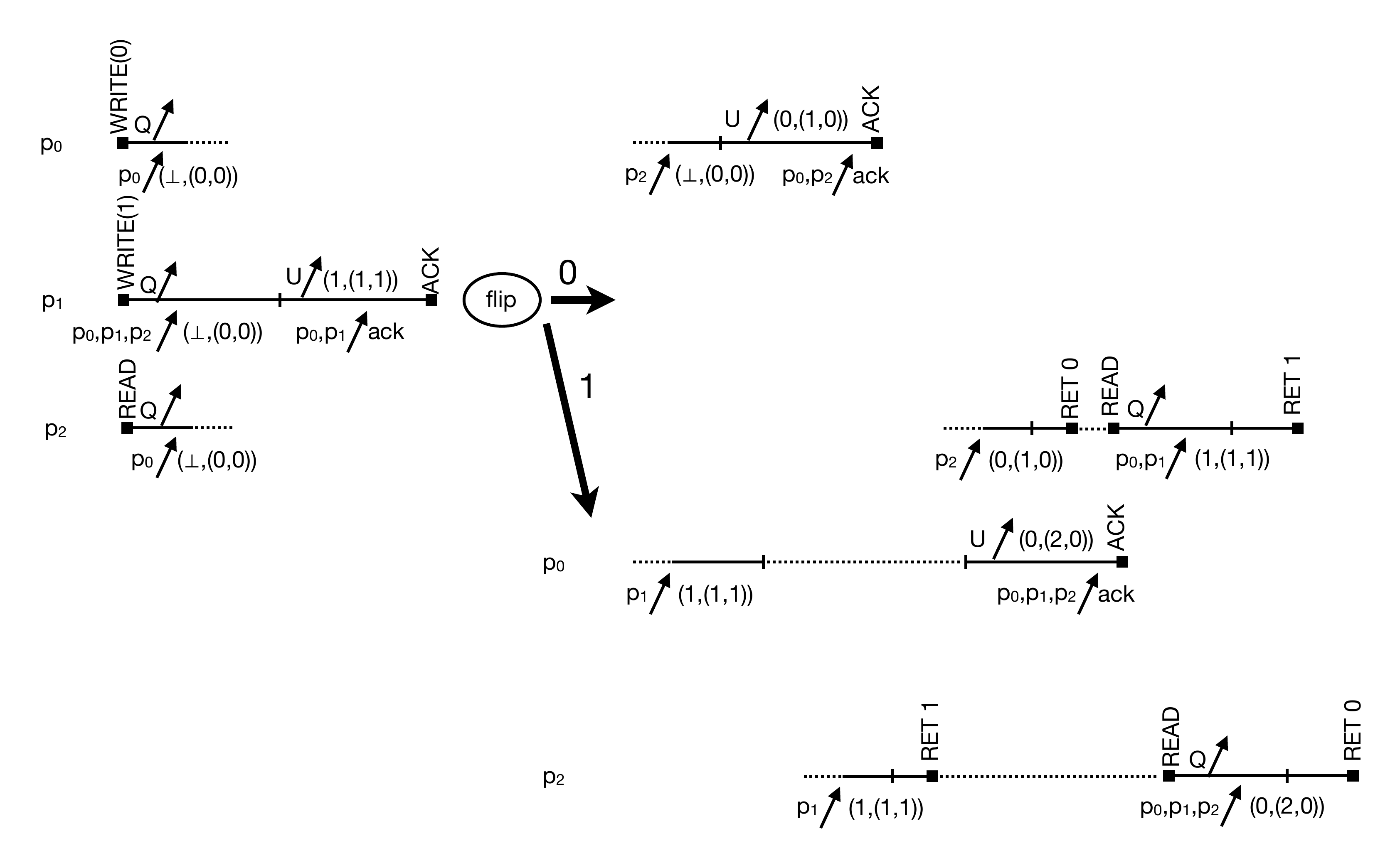}
\vspace{-2mm}
\caption{A strong adversary against ABD registers.}
  \label{fig:adv}
\vspace{-2mm}
\end{figure}

Figure~\ref{fig:adv} illustrates the counter-example,
just focusing on the Reads and Writes on $R$.
Time increases left to right.
The upper two timelines to the right of the coin flip indication
show the extensions of $p_0$'s and $p_2$'s computations when the flip
returns 0, while the lower two timelines show them when the flip
returns 1.  (There are no timelines for $p_1$ after the flip as it
does not access $R$ any more.)
Arrows leaving a timeline indicate broadcasts of query messages (labeled Q)
and update messages (labeled U and including the data), while
arrows entering a timeline indicate reply messages (containing data)
and ack messages received and are labeled with the senders.
Irrelevant update phases are not included.

Suppose that $p_0$ invokes its Write of 0 on $R$.
Let it receive the first reply to its query from $p_0$ (itself) containing
value $\bot$ and timestamp $(0,0)$.
Concurrently, suppose that $p_1$ invokes its Write of 1 on $R$.
Let it receive replies to its query from all the processes containing
value $\bot$ and timestamp $(0,0)$.
Then $p_1$ broadcasts its update message with value 1 and timestamp $(1,1)$.

Then suppose $p_2$ invokes its first Read of $R$.
Let it receive a reply to its query from $p_0$ with value $\bot$
and timestamp $(0,0)$.  That is, $p_0$ has not yet received $p_1$'s update
message when it replies to $p_2$.

Then suppose $p_1$ gets acks from $p_0$ and $p_1$ and completes its Write.
Then $p_1$ flips the coin.
In both cases discussed next, the adversary ensures that $p_2$ reads $C$ after
$p_1$ writes $C$ so that $p_2$'s local variable $c$ contains the
result of the coin.

{\em Case 1:}  Suppose the coin returns 0.  The adversary extends the
execution as follows, to ensure that $p_2$'s pending Read returns
0 into $p_2$'s local variable $u_1$ and $p_2$'s second Read returns 1
into $p_2$'s local variable $u_2$, causing $p_2$ to pass
the test at Line~\ref{testline} and loop forever.
Recall that $p_0$'s Write is also still pending.

Suppose $p_0$ gets its second reply from $p_2$ with value $\bot$ and
timestamp $(0,0)$; i.e., $p_2$ has not yet received $p_1$'s update message
when it replies to $p_0$.
Then $p_0$ broadcasts its update message with value 0 and timestamp $(1,0)$,
receives acks from $p_0$ and $p_2$, and completes its Write.

Now suppose that $p_2$ gets a reply from $p_2$ (itself) with value 0 and
timestamp $(1,0)$; i.e., $p_2$ has already received $p_0$'s update
message when it replies to itself.  So $p_2$ chooses value 0 and
timestamp $(1,0)$ for the update and its Read returns~0.

Then $p_2$ invokes its second Read of $R$.
Suppose that, in response to its query, it receives replies from $p_0$
and $p_1$ with value 1 and timestamp 1.
So $p_2$ chooses value 1 and timestamp $(1,1)$ for the update and
its Read returns 1.

{\em Case 2: } Suppose the coin returns 1.  The adversary extends the
execution as follows, to ensure that $p_2$'s pending Read returns
1 into $p_2$'s local variable $u_1$ and $p_2$'s second Read returns 0
into $p_2$'s local variable $u_2$, causing $p_2$ to pass 
the test at Line~\ref{testline} and loop forever.
Recall that $p_0$'s Write is also still pending.

Suppose $p_0$ gets its second reply from $p_1$ with value 1 and timestamp $
(1,1)$.
Then $p_2$ gets its second reply from $p_1$ with value 1 and timestamp $(1,1)$.
So $p_2$ chooses value 1 and timestamp $(1,1)$ for the update and its
Read returns 1.

We go back to considering $p_0$'s pending Write.  Next $p_0$
broadcasts its update message with value 0 and timestamp (2,0).
It receives ack messages from all three processes and the Write
finishes.

Finally, $p_2$ invokes its second Read of $R$.
Suppose that, in response to its query message,
$p_2$ receives reply messages with value 0 and timestamp
(2,0) from both $p_0$ and $p_1$.
So $p_2$ chooses value 0 and timestamp (2,0) for the update and its
Read returns 0.

\subsection{Blunting the Adversary with $ABD^2$}
\label{subsec:abd2-prob}

Now we consider the result of executing
Algorithm~\ref{algorithm:simplified-weakener}
using shared registers that are implemented with ABD$^2$.
We first give a simple argument, based on our main theorem,
that $p_2$ terminates with probability at least 1/8.
Then we show through a more specialized argument that this bound
is at least 3/8.

\begin{algorithm}[tb]
\caption{The transformed version ABD$^k$ corresponding to
ABD in Algorithm~\ref{algorithm:ABD}.}
\label{algorithm:ABD-star}
\begin{algorithmic}[0]
\vspace{-4mm}
\begin{multicols*}{2}
\STATE method {\bf Read():}
\bindent
\FOR{{\em i} := 1 \TO k }
\STATE{({\em v[i],ts[i]}) := queryPhase()}
\ENDFOR
\STATE {\em j} := random([1..k])
\STATE ({\em v,u}) := ({\em v[j],ts[j]})
\STATE  updatePhase({\em v,u})   // write-back
\STATE  \textbf{return} {\em v}
\eindent
%
%\vspace*{.1in}
%
\STATE method {\bf Write({\em v})} for process with id {\em p}:
\bindent
\FOR{{\em i} := 1 \TO k }
\STATE{ (-,({\em t[i]},-)) := queryPhase()   }
\ENDFOR
\STATE {\em j} := random([1..k])
\STATE {\em t} :=  {\em t[j]}
\STATE updatePhase({\em v},({\em t}+1,{\em p}))
\STATE  \textbf{return}
\eindent
\end{multicols*}
\vspace{-4mm}
\end{algorithmic}
\end{algorithm}

\subsubsection{A lower bound on the probability of $p_2$ terminating.}
As seen in Section~\ref{ssec:app_ABD}, the ABD register is ``exploited'' by the
adversary by scheduling the coin-flip \emph{during} the query phases performed in $p_0$'s Write
and $p_2$'s Read (in order to schedule some replies only when the result of the coin-flip is known).
Actually, scheduling the coin-flip so that it does \emph{not} overlap with a query phase
(it occurs after or before any query phase in a concurrently executing invocation), provides no gain to the adversary w.r.t. the atomic register case.
Indeed, in such a scenario, the linearization order between invocations that completed their query phase before the coin-flip is fixed even if they are still pending, a property that we call tail strong linearizability (see Theorem~\ref{ABD:TSL}), and the adversary cannot change the linearization order between the writes in particular, to accommodate a specific result of the coin flip.

When using $ABD^2$, the adversary can schedule the coin-flip to overlap with one of the two query phases in $p_0$'s Write
and $p_2$'s Read, but with probability 1/4 both of these invocations will choose to adopt the value-timestamp pair returned by the other query phase that does not overlap with the coin-flip (each invocation makes this choice with probability 1/2 and these choices are independent). Therefore, $ABD^2$ can blunt the adversary and with probability 1/4 make it behave as in the atomic register case. Therefore, with probability at least $1/4 \cdot 1/2=1/8$, the process $p_2$ terminates. This lower bound is a particular instance of our main result stated in Theorem~\ref{th:main}.

\subsubsection{A more detailed analysis.}
The reasoning above was agnostic to the particular values written to the registers or the conditions that are checked in the program. This makes it extensible to arbitrary programs and objects as we show in Section~\ref{section:ABD}. Nevertheless, as expected, a more precise analysis that takes into account these specifics can derive a better (bigger) lower bound. We present such an analysis in the following, showing that $p_2$ terminates with probability at least 3/8.

We show that no adversary can cause $p_2$ to loop forever, i.e.,
pass 
the test at Line~\ref{testline}, with probability more than 5/8.

Let $W_0$ be $p_0$'s Write of 0 to $R$,
$W_1$ be $p_1$'s Write of 1 to $R$,
$R_1$ be $p_2$'s first Read of $R$, and
$R_2$ be $p_2$'s second Read of $R$.

Consider the set $\mathcal E$ of executions that end with the program
coin flip by $p_1$ (Line~\ref{line:sim-weak-flip} in
Algorithm~\ref{algorithm:simplified-weakener}) and thus contain all of
$W_1$.  An adversary defines a probability distribution over $\mathcal
E$ which is a mapping $D$ from executions in $\mathcal E$ to
probabilities that sum to 1.
We refer to the adversary as {\em winning} when it causes $p_2$ to loop forever,
which happens only if $R_1$ reads the same value as the program
coin flip and $R_2$ reads the opposite value.

We show that the contribution of each execution
$E\in {\mathcal E}$ to the adversary's probability of winning is at
most $5 \cdot D(E) / 8$ (the sum over all $E$ leads to the $5/8$
bound). This proof considers a number of cases depending on which and
how many query phases of $W_0$ and $R_1$ finished in $E$.

When both query phases of either $W_0$ or $R_1$ are finished in $E$, the contribution of $E$ is actually at most
$D(E)/2$ (\textbf{Case 1} and \textbf{Case 2}). Since the random choice of which query response to use is independent
of the program coin flip, and the query responses are fixed before the coin flip,
the probability that the adversary wins in continuations of $E$ is at most 1/2.
In essence, Cases 1 and 2 behave as in the atomic case, since the
read-only preamble is already finished before the coin flip.
Otherwise, if the first query phase of $R_1$ did not yet return in $E$ (\textbf{Case 3}),
the value returned by one of  $W_0$'s query phases does not
``depend'' on the program coin flip ($W_0$'s first query returns in $E$ or otherwise, $W_0$'s second query returns $1$).
When choosing the result of this query in $W_0$,
the adversary fails for at least one value of the program coin flip
(wins with probability at most 1/2).
When choosing the other query, the adversary wins with probability at most 3/4, more precisely, at most 1/2 for one value of the program coin flip.
This is due to the random choice about which query response to return in $R_1$.
Overall, splitting over the random choice in $W_0$, we get that the adversary can win in continuations of
$E$ with probability at most (1/2 + 3/4)/2 = 5/8. Finally, for the case where the first query phase of $R_1$ returns in $E$ (\textbf{Case 4}),
the adversary's best strategy is to let this query phase return value 1 (written by $W_1$). However,
it can win with probability at most 1/2. When the program coin flip returns 0, if $R_1$ returns 0 because it chooses the response
of a second query phase, $R_2$ will return 0 as well
(since $W_0$ must have been linearized after $W_1$),
which means that the adversary fails in all such continuations.

\medskip
\noindent
\textbf{Case 1:}
Consider an execution $E \in {\mathcal E}$ such that
both query phases of $W_0$ are already finished.

If the random choice for which query phase result to use
for $W_0$ is included in $E$, then the timestamps of $W_0$ and $W_1$ are fixed,
as is their linearization order.
Without loss of generality, suppose $W_0$ is linearized before $W_1$;
then by linearizability, $p_2$ reads either 0,0 or 0,1 or 1,1, but not
1,0.  In order for the adversary to win, $p_2$ must read 0,1 {\em and}
the program coin flip must return 0, which occurs with probability 1/2.
Thus the contribution of $E$ to the adversary's probability of winning
is at most $D(E)/2$.

Suppose the random choice for which query phase result to use for $W_0$
is not included in $E$.
The continuations of $E$ can linearize $W_0$ before $W_1$ with some
probability $p$, and $W_1$ before $W_0$ with probability $1-p$.
When the program coin flip returns 0,
the adversary can win
only with the first linearization, since with the second linearization
it's impossible for $p_2$ to read 0,1.
Similarly, when the program coin flip returns 1,
the adversary can win only with the second linearization.
Thus the contribution of $E$ to the adversary's probability of winning
is at most $(D(E) \cdot p + D(E) \cdot (1-p))/2 = D(E)/2$.

\medskip
\noindent
\textbf{Case 2:}
Consider an execution $E \in {\mathcal E}$ such that both query phases of $R_1$
are finished.

As in the previous case, there are two possible scenarios depending
on whether the random choice for which query phase result to return by $R_1$
is included or not in $E$. In case it is included, the value of $R_1$ is fixed before
the program coin flip and the contribution of $E$ to the adversary's probability of
winning is at most $D(E)/2$. In case it is not, the two query phases either return
the same value which means that the value returned by $R_1$ is again fixed before
the program coin flip, or they return different values. If they return different values,
the ``best'' case for the adversary is that they return 0 and 1 (a $\bot$ value will
make the adversary lose independently of the outcome of the program coin flip).
However, the probability that the value returned by $R_1$ matches the value
returned by the program coin flip in continuations of $E$ is 1/2. Therefore, in both
scenarios, the contribution of $E$
to the adversary's probability of winning is at most $D(E)/2$.

\medskip
\noindent
\textbf{Case 3:}
Consider an execution $E \in {\mathcal E}$ where at least one query
phase of $W_0$ and the first query phase of $R_1$ are pending.

\smallskip
{\em Case 3.1:}
The pending query phase of $W_0$ is its second one.
We say that a query phase {\em sees} a Write if the query phase receives
a reply message with the value and timestamp of that Write.

\smallskip
{\em Case 3.1.1:}
Suppose $W_0$'s first query phase does not see $W_1$.

For all continuations of $E$ in which $W_0$’s update is based on the
first query
(i.e., $W_0$ is linearized before $W_1$) and
the program coin flip is 1,
$p_2$ cannot read 1 followed by 0.
The adversary loses in all such continuations.

Consider continuations of $E$ in which $W_0$’s update is based on its
second query
and the program coin flip is 0.
If this query phase sees $W_1$, then in all these continuations,
$W_1$ is linearized before $W_0$ which implies that
$p_2$ cannot read 0 followed by 1
and the adversary loses.
Therefore, it is in the adversary's interest that the second query
phase of $W_0$ does not see $W_1$,
and thus $W_0$ is linearized before $W_1$.
Then we need to look at $R_1$.
Its second query phase necessarily sees $W_1$ since it starts after $W_1$ finished. Therefore, if $R_1$
returns the value of its second query,
it returns 1, which causes the adversary to lose.
Consequently, at most half of these continuations make the adversary win.

Overall, the contribution of $E$ to the overall win
probability is at most $( D(E)/2 + 3\cdot D(E)/4) /2 = 5 \cdot D(E) / 8$.

\smallskip
{\em Case 3.1.2:}
Suppose $W_0$'s first query phase sees $W_1$.
This case is symmetric to Case 3.1.1, as detailed next.

For all continuations of $E$ in which $W_0$’s update is based on the
first query, i.e., $W_0$ is linearized after $W_1$, and the program coin
flip is 0, $p_2$ cannot read 0 followed by 1.
The adversary loses in all such continuations.
The continuations where $W_0$’s update is based on the
second query admit precisely the same argument as in Case 3.1.1.

The contribution of this execution remains at most $5 \cdot D(E) / 8$.

\smallskip
{\em Case 3.2:}  The pending query phase of $W_0$ is its first one.

Thus $W_0$'s second query phase is guaranteed to see $W_1$.
When the second query phase of $W_0$ is used, i.e.,
$W_0$ is linearized after $W_1$, the read $R_2$ cannot return 1,
implying that if the coin flip is 0, then the adversary loses.
Thus the adversary wins in at most half of these continuations.
The continuations where $W_0$’s update is based on the
first query admit precisely the same argument as the continuations in Case 3.1.1
based on the second query.

The overall contribution remains at most $5 \cdot D(E) / 8$.

\medskip
\noindent
\textbf{Case 4:}
Consider an execution $E \in {\mathcal E}$ where at least one query
phase of $W_0$ and the second query of $R_1$ are pending.
Similar to the previous cases, we can show that
the contribution of $E$ to the adversary's probability of winning
is at most $5 \cdot D(E) / 8$. 

\medskip
Summing over all the cases shows that
the maximum probability of an adversary winning is
$( \sum_{E \in {\mathcal E}} D(E) ) \cdot 5/8 = 5/8$.

\end{document}